\def\P{\Pi \circ}
\newcommand{\vv}[1]{\boldsymbol{#1}} 
\newcommand{\mean}[2]{{\mu}_{{#1}{#2}}} 
\newcommand{\measmean}[2]{\tilde{\mu}_{{#1}{#2}} } 
\newcommand{\noise}[1]{{\eta}_{#1} }
\newtheorem{lemma}{Lemma}
\title{Low-Delay High-Rate Operation of 802.11ac WLAN Downlink: Nonlinear Controller Analysis \& Design}
\author[1]{Francesco Gringoli}
\address[1]{University of Brescia, Italy}
\author[2]{Douglas J. Leith\corref{cor1}}
\ead{doug.leith@tcd.ie}
\address[2]{Trinity College Dublin, Ireland}
\begin{document}

\global\csname @topnum\endcsname 0
\global\csname @botnum\endcsname 0

\begin{abstract}
In this paper we consider a next generation edge architecture where traffic is routed via a proxy located close to the network edge (e.g. within a cloudlet).  This creates freedom to implement new transport layer behaviour over the wireless path between proxy and clients.   We use this freedom to develop a novel traffic shaping controller for the downlink in 802.11ac WLANs that adjusts the send rate to each WLAN client so as to maintain a target number of packets aggregated in each transmitted frame.   In this way robust low-delay operation at high data rates becomes genuinely feasible across a wide range of network conditions.  Key to achieving robust operation is the design of an appropriate feedback controller, and it is this which is our focus.   We develop a novel nonlinear control design inspired by the solution to an associated proportional fair optimisation problem.   The controller compensates for system nonlinearities and so can be used for the full envelope of operation.   The robust stability of the closed-loop system is analysed and the selection of control design parameters discussed.     We develop an implementation of the nonlinear control design and use this to present a performance evaluation using both simulations and experimental measurements.
\end{abstract}
\maketitle

\section{Introduction}
In this paper we consider next generation edge transport architectures of the type illustrated in Figure \ref{fig:edge}(a) with the objective of achieving high rate, low latency communication on the downlink.   Traffic to and from client stations is routed via a proxy located close to the network edge (e.g. within a cloudlet).  This creates the freedom to implement new transport layer behaviour over the path between proxy and clients, which in particular includes the last wireless hop.   

For high rate, low latency communication we would like to select a downlink send rate which is as high as possible yet ensures that a persistent queue backlog does not develop at the wireless access point (AP).    Recently, it has been shown that the number of packets aggregated in transmitted frames is strongy coupled to the queue backlog at the AP and that adjusting the send rate so as to regulate the aggregation level acts to regulate the queueing delay at the AP~\cite{quickandplenty}.     

\begin{figure}
\centering
\subfigure[]{
\raisebox{0.25in}{\includegraphics[width=0.47\columnwidth]{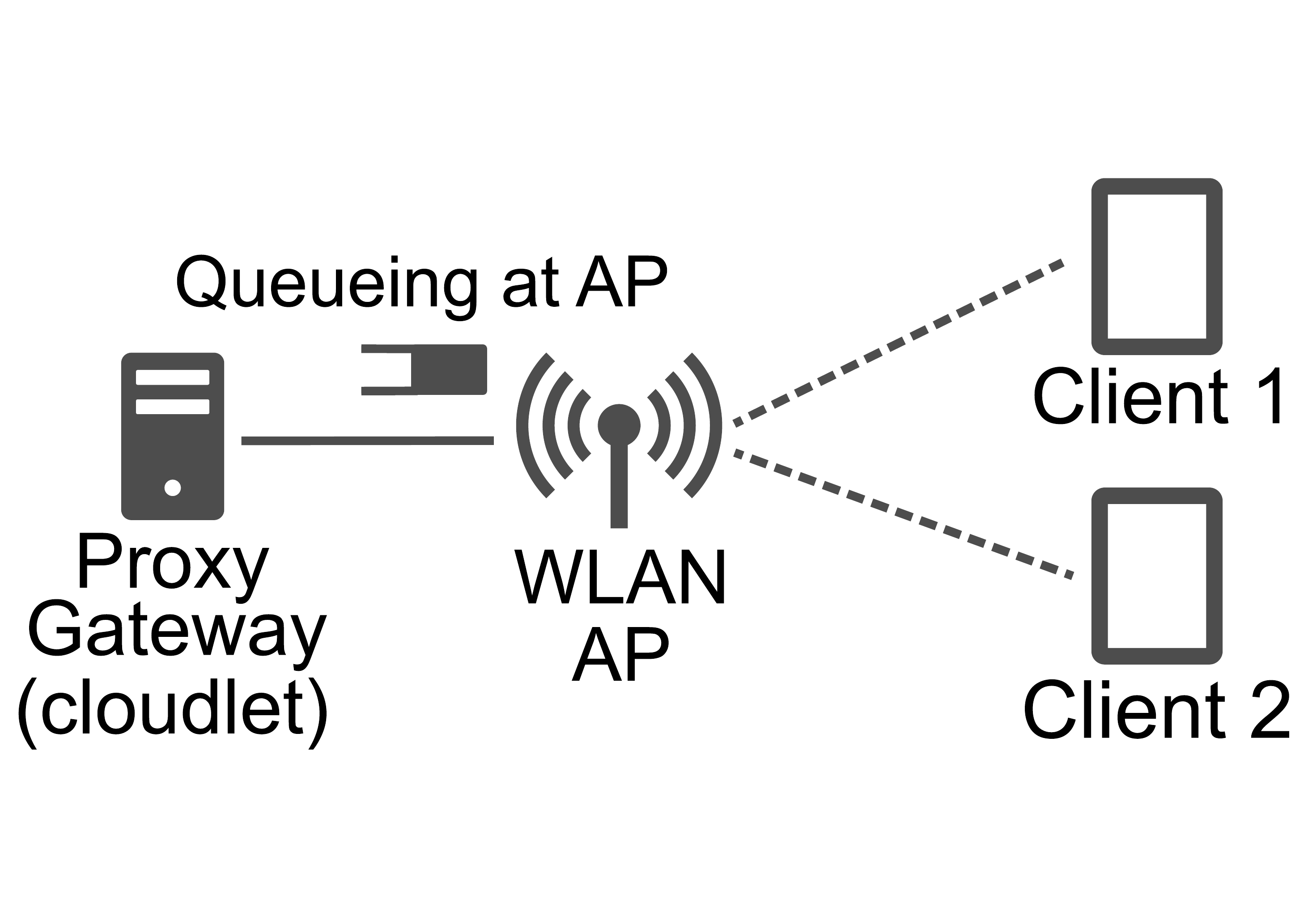}}
}
\subfigure[]{
\includegraphics[width=0.47\columnwidth]{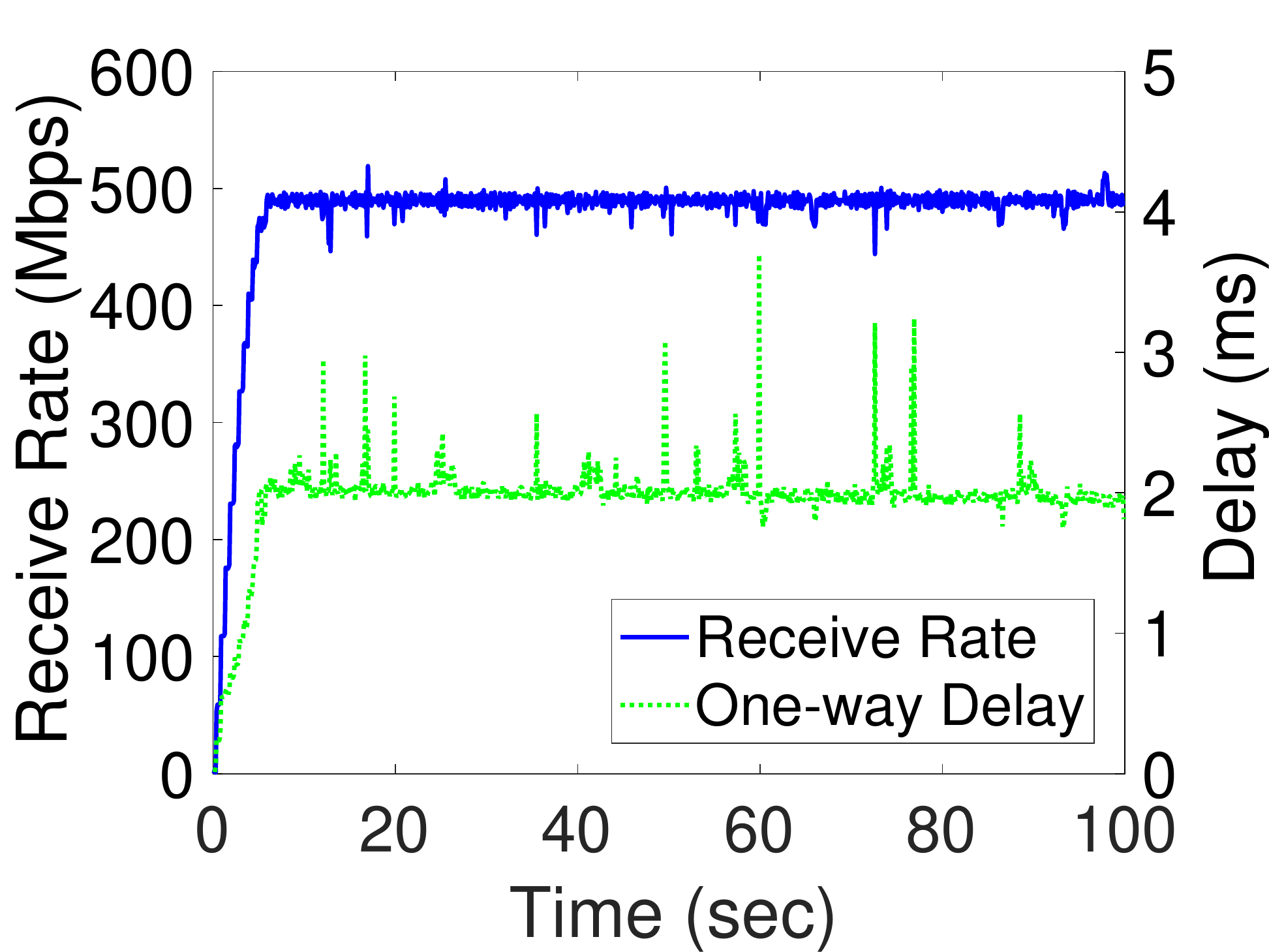}
}
\caption{(a) Cloudlet-based edge transport architecture with bottleneck in the WLAN hop (therefore queueing of downlink packets occurs at the AP as indicated on schematic). One great advantage of this architecture is its ease of rollout since the new transport can be implemented as an app on the clients plus a proxy deployed in the cloud; no changes are required to existing AP's or servers.  (b) Illustrating low-latency high-rate operation in an 802.11ac WLAN (measurements are from a hardware testbed located in an office environment).}
\label{fig:edge}
\end{figure}

However, the relationship between send rate, aggregation level and queueing delay is strongly nonlinear.  As a result it is challenging to design a stable, well-behaved feedback controller that dynamically adjusts send rate so as to regulate aggregation level and delay as network conditions change.   In this paper we develop a novel rate control algorithm that completely solves this task.    The controller compensates for system nonlinearities and so can be used for the full range of network operation.   The robust stability of the closed-loop system is analysed and the selection of control design parameters discussed.     We develop an implementation of the nonlinear controller and use this to present a performance evaluation using both simulations and experimental measurements.  

Figure \ref{fig:edge}(b) shows typical results obtained from a hardware testbed located in an office environment.  It can be seen that the one-way delay is low, at around 2ms, while the send rate is high, at around 500Mbps (this data is for an 802.11ac downlink using three spatial streams and MCS 9).  Increasing the send rate further leads to sustained queueing at the AP and an increase in delay, but the results in Figure \ref{fig:edge}(b) illustrate the practical feasibility of operation in the regime where the rate is maximised subject to the constraint that sustained queueing is avoided.  

A notable feature of our controller design is that it is closely related to a gradient descent algorithm for solving a reformulated proportional fair utility optimisation problem. It therefore establishes a new and interesting connection between feedback control and online optimisation.

\section{Related Work}

The closest related work is \cite{quickandplenty}, which introduced the idea of using aggregation to regulate send rate in WLANs.  They use a linear PI controller but no stability analysis is provided and consideration is confined to a small operational envelope where the system exhibits linear behaviour.    

Control theoretic analysis of WLANs has received relatively little attention in the literature, and has almost entirely focussed on MAC layer resource allocation, see for example \cite{BanchsSA06,BoggiaCGM07,Garcia-SaavedraBSW12,SerranoPMMB13} and references therein.    In contrast, there exists a substantial literature on control theoretic analysis of congestion control at the transport layer, see for example the seminal work in \cite{Kelly1998,JinWL04}.  However, this has mainly focussed on end-to-end behaviour in wired networks with queue overflow losses and has largely ignored detailed analysis of operation over WLANs.   This is perhaps unsurprising since low delay operation at the network edge has only recently come to the fore as a key priority for next generation networks.

TCP BBR~\cite{BBR} is currently being developed by Google and this also targets high rate and low latency, although not specifically in edge WLANs.  The BBR algorithm tries to estimate the bottleneck bandwidth and adapt the send rate accordingly to try to avoid queue buildup.  The delivery rate in BBR is defined as the ratio of the in-flight data when a packet departed the server to the elapsed time when its ACK is received.  This may be inappropriate, however, when the bottleneck is a WLAN hop since aggregation can mean that increases in rate need not correspond to increases in delay plus a small queue at the AP can be benificial for aggregation and so throughput.

\section{Preliminaries}
\subsection{Aggregation In 802.11ac}
A feature shared by all WLAN standards since 2009 (when 802.11n was introduced) has been the use of aggregation to amortise PHY and MAC framing overheads across multiple packets.   This is essential for achieving high throughputs.   {Since the PHY overheads are largely of fixed duration, increasing the data rate reduces the time spent transmitting the frame payload but leaves the PHY overhead unchanged.}  Hence, the efficiency, as measured by the ratio of the time spent transmitting user data to the time spent transmitting an 802.11 frame, decreases as the data rate increases unless the frame payload is also increased i.e. several data packets are aggregated and transmitted in a single frame. 

{In 802.11ac two types of aggregation are used.  Namely, one or more ethernet frames may aggregated into a single 802.11 AMSDU frame, and multiple AMSDUs then aggregated into an 802.11 AMPDU frame.   Typically, two ethernet frames are aggregated into one AMSDU and up to 64 AMSDUs aggregated into one AMPDU, allowing up to 128 ethernet frames to be sent in a single  AMPDU.   
}

The level of aggregation can be readily measured at a receiver either directly from packet MAC timestamps when these are available (which depends on hardware support and the availability of root access) or by applying machine-learning techniques to packet kernel timestamps (which are always available via the standard socket API), see \cite{quickandplenty} for details.

\subsection{Modelling Aggregation Level in Paced 802.11ac WLANs}
For control analysis and design we need a model of the relation between send rate and aggregation level on the downlink.  The model does not need to be exact since the feedback action of the controller can compensate for model uncertainty, but should capture the broad relationship.   In general, finite-load analytic modelling of aggregation in 802.11ac is challenging due to: (i) the randomness in the time between frame transmissions caused by the stochastic nature of the CSMA/CA MAC and (ii) correlated bursty packet arrivals.   As a result most finite-load analysis of packet aggregation to date has resorted to use of simulations.   Fortunately, since we control the sender we can use packet pacing and when packets are paced simple yet accurate analytic models are known~\cite{model}.   Following \cite{model}, let $n$ denote the number of client stations in the WLAN, $x_i$ the mean send rate to client $i$ in packets/sec and $\vv{x}=(x_1,\dots,x_n)$ the vector of mean send rates.   Let $\mean{T}{_{oh}}$ denote the time per frame used for CSMA/CA channel access, PHY and MAC headers plus transmission of the MAC ACK so that $c=n\mean{T}{_{oh}}$ is the aggregate overhead for a round of transmissions to the $n$ client stations, $w_i$ the mean time to transmit one packet to client $i$ at the PHY data rate and $\vv{w}=(w_1,\dots,w_n)^T$ the vector of transmit times, $N_{max}$ the maximum number of packets that can be sent in a frame (typically 32 or 64).   Parameters $c$ and $\vv{w}$ capture the WLAN PHY/MAC configuration.   Define $\mean{N}{_i}$ to be the mean number of packets in each frame sent to client $i$ and $\mean{\vv{N}}{}=(\mean{N}{_1},\dots,\mean{N}{_n})$ the vector of mean aggregation levels.  Then,
\begin{align}
\mean{\vv{N}}{}=\Pi\frac{c\vv{x}}{1-\vv{w}^T\vv{x}} =\Pi \vv{F}(\vv{x})\label{eq:model}
\end{align}
where $\Pi$ denotes projection onto interval $[1,N_{max}]$ and $\vv{F}(\vv{x}):=\frac{c\vv{x}}{1-\vv{w}^T\vv{x}}$.   The mean delay $\mean{T}{_i}$ of packets sent to client $i$ is upper bounded by
\begin{align}
\mean{T}{_i} &=\max\{\min\{\frac{c}{1-\vv{w}^T\vv{x}},\frac{N_{max}}{x_i}\},\frac{1}{x_i}\}\label{eq:delay}
\end{align}

Observe that $\vv{F}(\vv{x})$ is monotonically increasing for feasible rate vectors $\vv{x}$ since $\frac{\partial{F}_i^\prime(\vv{x})}{\partial x_i}=\frac{c}{(1-\vv{w}^T\vv{x})^2}(1-\vv{w}^T\vv{x}+w_ix_i)>0$ and $\frac{\partial{F}_i^\prime(\vv{x})}{\partial x_j}=\frac{cw_ix_i}{(1-\vv{w}^T\vv{x})^2}>0$ when $x_i\ge 0$ and $\vv{w}^T\vv{x}<1$.   Hence, $\vv{F}(\vv{x})$ is one-to-one and so invertible.  In particular, 
\begin{align}
\vv{F}^{-1}({\mean{\vv{N}}{}})=\frac{{\mean{\vv{N}}{}}}{c+\vv{w}^T{\mean{\vv{N}}{}}}\label{eq:inv}
\end{align}
and it can be verified that $\vv{F}(\vv{F}^{-1}({\mean{\vv{N}}{}}))={\mean{\vv{N}}{}}$.   
Given rate vector $\vv{x}$ we can therefore obtain the corresponding aggregation level from $F(\vv{x})$ and, conversely, given aggregation level vector ${\mean{\vv{N}}{}}$ we can obtain the corresponding rate vector from $\vv{F}^{-1}(\mean{\vv{N}}{})$.   This will prove convenient in the analysis below since it means we can freely change variables between $\vv{x}$ and ${\mean{\vv{N}}{}}$.   For example, substituting $\vv{x}=\vv{F}^{-1}(\mean{\vv{N}}{})$ the term $\frac{c}{1-\vv{w}^T\vv{x}}$ in the mean delay (\ref{eq:delay}) can be expressed equivalently in terms of ${\mean{\vv{N}}{}}$ as,
\begin{align}
\frac{c}{1-\vv{w}^T\vv{x}}= c+\vv{w}^T\mean{\vv{N}}{}\label{eq:handyT}
\end{align}
\subsection{Model Uncertainty}
A key consideration is that the model (\ref{eq:model}) is only approximate and so simply applying a utility fair rate allocation derived from this model may lead to poor behaviour (e.g. unfairness, higher delay, lower rate) in the actual WLAN.  Instead we require a feedback-based approach that measures the actual WLAN behaviour and adjusts the send rates so as to achieve fair high-rate, low-delay behaviour.  

With this in mind we would also like to characterise the main sources of modelling error.  In model (\ref{eq:model}) the function $\vv{{F}}$ relating mean aggregation level to send rate involves send rates $\vv{x}$, PHY MCS rates $\vv{w}$ and MAC overhead parameter $c=n\mean{T}{_{oh}}$.  The send rates are known since we control the sender, and the PHY MCS rates of received frames can be readily observed at client stations and reported back to the sender.  However, parameter $c$ cannot be easily measured and is only approximately known.  This is because the mean channel access time $\mean{T}{_{oh}}$ cannot be measured directly (since we consider the transport layer we assume we do not have access to the MAC on the AP) and it depends on the channel state and so may be strongly affected by neighbouring WLANS, interference etc.   Hence, only a fairly rough estimate of parameter $c$ is generally available and this is the main source of parameter uncertainty in the model.  

To distinguish between the model and the actual WLAN we will use $\vv{{F}}$ to refer to the model and $\vv{\tilde{F}}$ to refer to the actual WLAN behaviour.   In more detail, we divide time into slots of duration $\Delta$ and suppose that the send rate is held constant over a slot.   Let $\Phi_i(k)$ be the set of frames received by client station $i$ in slot $k$.  The client $i$ can observe the number of packets ${N}_{i,f}$ in each received frame $f\in\Phi_i(k)$ and calculate the empirical mean aggregation level
\begin{align}
\measmean{{N}}{_i}(k)=\frac{1}{|\Phi_i(k)|}\sum_{f\in \Phi_i(k)}{N}_{i,f}
\end{align} 
Letting $\vv{\tilde{F}}_k=E[\frac{1}{|\Phi_i(k)|}\sum_{f\in \Phi_i(k)}{N}_{i,f}]$ and $\vv{x}_k$ be the send rate in slot $k$ then
\begin{align}
\measmean{{N}}{_i}(k)=\vv{\tilde{F}}_k(\vv{x}_k)+\noise{\measmean{\vv{N}}{}}\label{eq:meas}
\end{align} 
where $\noise{\measmean{\vv{N}}{}}$ is the measurement noise. Due to mismatches between the model and the real system, in general $\vv{\tilde{F}}_k(\vv{x}_k)\ne \vv{F}(\vv{x}_k)$.    

%
%

\section{Proportional Fair Low-Delay Rate Allocation}

\subsection{Utility Fair Optimisation Is Convex }
Our interest is in achieving high rates while maintaining low delay at the AP.   We begin by first considering the proportional fair low delay rate allocation, which is the solution to the following utlity-fair optimisation $P$: 
\begin{align}
&\max_{\vv{x}\in\mathbb{R}^n_+}\sum_{i=1}^n \log x_i  \label{eq:obj}\\
s.t.\ &  \mean{T}{_i}(\vv{x})\le \bar{T}, i=1,\dots,n \label{eq:con1}\\
&\mean{N}{_i}(\vv{x})\le \bar{N}, i=1,\dots,n \label{eq:con2}
\end{align}
Constraint (\ref{eq:con1}) ensures that the mean delay at the AP is no more than upper limit $\bar{T}$, where $\bar{T}$ is a QoS parameter.  Constraint (\ref{eq:con2}) ensures that we operate at an aggregation level no more than $\bar{N}<N_{max}$.  

Substituting from (\ref{eq:delay}) the constraints (\ref{eq:con1}) can be written\footnote{Note that constraint (\ref{eq:con2}) ensures  $\mean{N}{_i}(\vv{x})\le \bar{N}<N_{max}$ and so $ \mean{T}{_i}(\vv{x})<N_{max}/x_i$.  Since our interest is primarily in applications requiring high rates we assume for simplicity that $\frac{c}{1-\vv{w}^T\vv{x}}\ge\frac{1}{x_i}$ although this could be added as the additional linear constraint $cx_i +  \vv{w}^T\vv{x} \ge  1$ if desired.} as $\frac{c}{1-\vv{w}^T\vv{x}}\le \bar{T}$.  Rearranging gives $c\le \bar{T}(1-\vv{w}^T\vv{x})$ i.e. $\vv{w}^T\vv{x} \le 1-c/\bar{T}$.   In this form it can be seen that the constraint is linear, and so convex.   Similarly, substituting from (\ref{eq:model}) the constraints (\ref{eq:con2}) can be written equivalently as $\frac{cx_i}{1-\vv{w}^T\vv{x}} \le  \bar{N}$, $i=1,\dots,n$.  Rearranging gives $cx_i\le  \bar{N}(1-\vv{w}^T\vv{x})$ i.e. $cx_i +  \bar{N}\vv{w}^T\vv{x} \le  \bar{N}$, which again is linear.   Hence, optimisation $P$ can be equivalently rewritten as optimisation $P^\prime$:
\begin{align}
&\max_{\vv{x}\in\mathbb{R}^n_+}\sum_{i=1}^n \log x_i\\
s.t.\ & \vv{w}^T\vv{x} \le 1-c/\bar{T} \label{eq:con1b}\\
&cx_i +  \bar{N}\vv{w}^T\vv{x} \le  \bar{N}, i=1,\dots,n \label{eq:con2b}
\end{align}
which is convex. 

\subsection{Characterising The Proportional Fair Solution}\label{sec:soln}
The Lagrangian of optimisation $P^\prime$ is $L(\vv{x},\theta,\vv{\lambda}):=-\sum_{i=1}^n \log x_i +\theta( \vv{w}^T\vv{x}-(1-c/\bar{T}))+ \sum_{i=1}^n\lambda_i(cx_i +  \bar{N}\vv{w}^T\vv{x} - \bar{N})$ where $\theta$ and $\lambda_i$, $i=1,\dots,n$ are multipliers associated with, respectively, (\ref{eq:con1b}) and (\ref{eq:con2b}).  Since the optimisation is convex the KKT conditions are necessary and sufficient for optimality.  Namely, an optimal rate vector $\vv{x}^*$ satisfies
\begin{align}
-\frac{1}{x_i^*} +\lambda_i c +\sum_{j=1}^n\lambda_j  \bar{N}w_i+\theta w_i=0
\end{align}
i.e.
\begin{align}
x_i^*=\frac{1}{\lambda_i c +Dw_i}
\end{align}
where $D:=( \bar{N}\sum_{j=1}^n\lambda_j +\theta )$.

Let $U=\{i:\mean{N}{_i}(\vv{x}^*)< \bar{N}\}$ denote the set of stations for which the aggregation level is strictly less than $ \bar{N}$ at the optimal rate allocation.  By complementary slackness $\lambda_i=0$ for $i\in U$ and so $x_i^*=1/(Dw_i)$.   That is, $\mean{N}{_i}=\frac{cx_i^*}{(1-\vv{w}^T\vv{x}^*)}=\frac{c}{D(1-\vv{w}^T\vv{x}^*)}\frac{1}{w_i}$.   Observe that the first term is invariant with $i$ and so the aggregation level of station $i\in U$ is proportional to $1/w_i=\mean{R}{_i}/L$ i.e. to the mean MCS rate of the station.  For stations $j\notin U$ the aggregation level $\mean{N}{_j}(\vv{x}^*)= \bar{N}$.  

Putting these observations together, it follows that
\begin{align}
\mean{N}{_i}(\vv{x}^*)=\min\{\frac{c}{D(1-\vv{w}^T\vv{x}^*)}\frac{1}{w_i},  \bar{N}\},\ i=1,\dots,n
\end{align}
Assume without loss that the station indices are sorted such that $w_1\ge w_2\ge \dots \ge w_n$.  Then 
\begin{align}
\mean{N}{_i}(\vv{x}^*)=\min\{\mean{N}{_{1}}(\vv{x}^*)\frac{w_{1}}{w_i},  \bar{N}\},\ i=2,\dots,n\label{eq:soln1}
\end{align}
Hence, once the optimal $\mean{N}{_1}(\vv{x}^*)$ is determined we can find the optimal aggregation levels for the rest of the stations.  With these we can then use inverse mapping (\ref{eq:inv}) to recover the proportional fair rate allocation, namely $x_i^*=\mean{N}{_i}/(c+\vv{w}^T\mean{\vv{N}}{})$.   

It remains to determine $\mean{N}{_1}$.  We proceed as follows.
\begin{lemma}\label{lem:one}
At an optimum $\vv{x}^*$ of $P^\prime$ then either (i) $\mean{N}{_i}(\vv{x}^*)= \bar{N}$ for all $i=1,\dots,n$ or (ii) $\mean{T}{_i}(\vv{x}^*)=\bar{T}$ for all $i=1,\dots,n$.
\begin{proof}
We proceed by contradiction.  Suppose at an optimum $\mean{N}{_i}(\vv{x}^*)=\frac{cx_i^*}{1-\vv{w}^T\vv{x}^*}< \bar{N}$ for some $i$ and $\mean{T}{_i}(\vv{x}^*)=\frac{c}{1-\vv{w}^T\vv{x}^*}<\bar{T}$.  Then we can increase $x^*_i$ without violating the constraints (with this change $\frac{c}{1-\vv{w}^T\vv{x}^*}$ and $\frac{cx_i^*}{1-\vv{w}^T\vv{x}^*}$ will both increase, but since the corresponding constraints are slack if the increase in $x^*_i$  is sufficiently small then they will not be violated).  Hence, we can improve the objective which yields the desired contradiction since we assumed optimality of $\vv{x}^*$.  Hence when $\mean{N}{_i}(\vv{x}^*)< \bar{N}$ for at least one station then $\mean{T}{_i}(\vv{x}^*)=\bar{T}$.  Alternatively, $\mean{N}{_i}(\vv{x}^*)= \bar{N}$ for all stations.
\end{proof}
\end{lemma}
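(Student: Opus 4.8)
The plan is to argue by contradiction. Suppose $\vv{x}^*$ is optimal for $P^\prime$ but neither (i) nor (ii) holds. Note first that in the regime considered here $\mean{T}{_i}(\vv{x})=\frac{c}{1-\vv{w}^T\vv{x}}$ is the same for every $i$, so the delay constraints (\ref{eq:con1}) collapse to the single linear constraint (\ref{eq:con1b}); hence the failure of (ii) means (\ref{eq:con1b}) is slack, i.e. $\frac{c}{1-\vv{w}^T\vv{x}^*}<\bar{T}$. The failure of (i) means there is an index $i$ with $\mean{N}{_i}(\vv{x}^*)<\bar{N}$, i.e. the $i$th constraint (\ref{eq:con2b}) is slack. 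The aim is then to turn this slack into a feasible perturbation of $\vv{x}^*$ that strictly increases $\sum_{j=1}^n\log x_j$, contradicting optimality.

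The natural first move is to perturb $\vv{x}^*\mapsto\vv{x}^*+\epsilon e_i$, with $e_i$ the $i$th coordinate vector and $\epsilon>0$ small, for an index $i$ with $\mean{N}{_i}(\vv{x}^*)<\bar{N}$. Using (\ref{eq:model}) and the monotonicity and continuity of $\vv{F}$ and of $\frac{c}{1-\vv{w}^T\vv{x}}$ recorded earlier in the excerpt, for $\epsilon$ small enough the quantities $\frac{c}{1-\vv{w}^T(\vv{x}^*+\epsilon e_i)}$ and $\mean{N}{_i}(\vv{x}^*+\epsilon e_i)$ remain below $\bar{T}$ and $\bar{N}$ respectively (both were strict), the vector stays feasible since $\vv{w}^T\vv{x}<1$, and $\log x_i$ strictly increases while the other $\log x_j$ are unchanged. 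If the delay constraint and the $i$th aggregation constraint were the only ones that could be affected by this move, the contradiction would already be complete.

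The step I expect to be the real obstacle is the remaining case in which some \emph{other} station $j\ne i$ already sits at $\mean{N}{_j}(\vv{x}^*)=\bar{N}$: since $\frac{\partial F_j}{\partial x_i}>0$, a bare increase of $x_i^*$ would violate that station's aggregation constraint, so a single-coordinate bump no longer works. To handle this I would instead seek an improving direction that stays inside the face of the feasible polytope cut out by the active constraints --- for instance, raise $x_i^*$ while lowering the rates of the stations whose aggregation constraint is tight by exactly enough to keep those constraints satisfied --- and then show the first-order change of $\sum_j\log x_j$ along this direction is strictly positive. The bookkeeping here should be carried by the KKT characterisation already obtained in Section~\ref{sec:soln}: when (\ref{eq:con1b}) is slack one has $\theta=0$, for the slack station $\lambda_i=0$ so $x_i^*=1/(Dw_i)$, and $D=\bar{N}\sum_j\lambda_j$; substituting these into the computed first-order change is what one would need to show delivers a strict improvement. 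Verifying that such a compound perturbation genuinely yields an improvement in every remaining configuration of active constraints is the delicate part of the argument; the rest is routine continuity and monotonicity manipulation with (\ref{eq:model})--(\ref{eq:handyT}).
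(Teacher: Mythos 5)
Your opening perturbation is, in fact, the entirety of the paper's proof: the authors increase $x_i^*$ for a station with $\mean{N}{_i}(\vv{x}^*)<\bar{N}$ and assert that feasibility is preserved because ``the corresponding constraints are slack''. The obstacle you single out in your last paragraph --- that raising $x_i$ raises \emph{every} $\mean{N}{_j}=cx_j/(1-\vv{w}^T\vv{x})$ through the common denominator, so a station $j\ne i$ already sitting at $\bar{N}$ blocks the single-coordinate bump --- is a genuine gap, and the paper's proof simply does not address it. So on the portion you completed you coincide with the paper, and the step you flag as delicate is precisely the step the paper is missing.

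Unfortunately the repair you sketch (a compound direction inside the active face, validated by KKT bookkeeping) cannot be carried out, because in that remaining configuration the point can be a genuine global optimum: the lemma as stated is then false. Take $n=2$, $c=1$, $\bar{N}=10$, $w_1=0.3$, $w_2=0.1$, $\bar{T}=100$. The point $x_1^*=5/3$, $x_2^*=2.5$ satisfies the KKT conditions of the convex problem $P^\prime$ with $\theta=0$, $\lambda_1=0$, $\lambda_2=0.2$ (check: $1/x_1^*=0.6=\bar{N}\lambda_2w_1$ and $1/x_2^*=0.4=\lambda_2c+\bar{N}\lambda_2w_2$), hence is optimal; yet $\mean{N}{_1}(\vv{x}^*)=20/3<\bar{N}$, $\mean{N}{_2}(\vv{x}^*)=10=\bar{N}$, and the common delay is $c/(1-\vv{w}^T\vv{x}^*)=4<\bar{T}$, so neither (i) nor (ii) holds. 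For $n=2$ this happens whenever $w_1-w_2>c/\bar{N}$ and $\bar{T}$ is large enough that (\ref{eq:con1b}) is slack --- a physically realistic regime for mixed-MCS WLANs given the paper's own numbers ($c/\bar{N}\approx 4\mu s$ versus per-packet times of $30$--$140\mu s$). The statement can only be rescued under an extra hypothesis such as identical $w_i$: then for a tight station $j$ and slack station $i$ stationarity gives $x_j^*=1/(\lambda_jc+Dw)\le 1/(Dw)=x_i^*$, while tightness of $j$'s constraint in (\ref{eq:con2b}) against slackness of $i$'s forces $x_j^*>x_i^*$, a contradiction; hence either no aggregation constraint is active (and your first perturbation finishes the argument) or all of them are. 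As written, neither your proposal nor the paper closes the heterogeneous case, and no proof can.
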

It follows from Lemma \ref{lem:one} that $\mean{N}{_1}=\min\{\bar{T}x_1^*,  \bar{N}\}$.   Substituting into (\ref{eq:soln1}) and combining with inverse mapping (\ref{eq:inv}) it follows that
\begin{align}
\vv{x}^*&=\vv{F}^{-1}(\mean{\vv{N}}{}(\vv{x}^*))\label{eq:soln2a}
\end{align}
with
\begin{align}
\mean{\vv{N}}{}(\vv{x}^*)&=\min\{\bar{T}x_1^*,  \bar{N}\}\vv{W}\label{eq:soln2b}
\end{align}
where vector $\vv{W}=[1,\frac{w_{1}}{w_2},\dots,\frac{w_{1}}{w_n}]^T$. Recall that the station indices are sorted such that $w_1\ge w_2\ge \dots \ge w_n$ and so all of the elements of $\vv{W}$ are less than or equal to one, with equality only when the PHY data rate is maximal amongst the client stations.  The proportional fair rate $\vv{x}^*$ and associated aggregation level $\mean{\vv{N}}{}$ can now be found by solving equations (\ref{eq:soln2a})-(\ref{eq:soln2b}).   When $\bar{N}=+\infty$ or $\bar{T}=+\infty$ it can be verified that the proportional fair solution is an equal airtime one.  

\subsection{Offline Algorithm For Solving Optimisation $P^\prime$}
While we can solve convex optimisation $P^\prime$ using any standard algorithm, e.g. a primal-dual update, the following update will prove convenient since it lends itself readily to conversion to online form, as will be shortly discussed.  Letting $\vv{z}=\vv{F}(\vv{x})$ we can rewrite (\ref{eq:soln2a}) equivalently\footnote{Since $\mean{N}{_i}(\vv{x}^*)\le \bar{N}$ the minimiser satisfies $\vv{z}^*=\mean{N}{}(\vv{x}^*)$ i.e. $\vv{x}^*=\vv{F}^{-1}(\vv{z}^*)=\vv{F}^{-1}(\mean{\vv{N}}{}(\vv{x}^*))$ as required.} as
\begin{align}
\vv{z}^*\in\arg\min_{\vv{z}\in\mathbb{R}_+^n} \|\min\{\mean{\vv{N}}{}(\vv{x}^*),\bar{N}\}-\vv{z}\|_2^2
\end{align}
where the min of $\mean{\vv{N}}{}(\vv{x}^*)$ and $\bar{N}$ is applied elementwise.  Similarly, we can rewrite (\ref{eq:soln2b}) equivalently\footnote{The minimiser of the optimisation is $\nu=\min\{\bar{T}x_{1}^*, \bar{N}\}$ and so the solution is $\mean{N}{}(\vv{x}^*)=\min\{\bar{T}x_{1}^*, \bar{N}\}\vv{W}$, which matches (\ref{eq:soln2b}).} as 
\begin{align}
\nu(\vv{x}^*)&\in\arg\min_{\nu\in\mathbb{R}^+} (\nu-\min\{\bar{T}x_{1}^*, \bar{N}\})^2\\
\mean{\vv{N}}{}(\vv{x}^*)&=\nu(\vv{x}^*)\vv{W}
\end{align}
with $\vv{x}^*=\vv{F}^{-1}(\vv{z}^*)$.  Applying gradient descent to these two coupled optimisations yields update
\begin{align}
\vv{z}(k+1) &= \vv{z}(k) +K_1 (\min\{\nu(k)\vv{W},\bar{N}\}-\vv{z}(k))\label{eq:alg1}\\
\nu(k+1)&=[\nu(k)-K_2(\nu(k)-\min\{\bar{T}x_{1}(k), \bar{N}\})]^+\label{eq:alg2}\\
\vv{x}(k)&=\vv{F}^{-1}(\vv{z}(k))\label{eq:alg3}
\end{align}
with step sizes $K_1$, $K_2$, $[x]^+=x$ when $x\ge1$ and 1 otherwise, and where time is slotted with $\vv{z}(k)$ etc denoting the value in slot $k$.   Standard Lyapunov arguments (namely try a quadratic candidate Lyapunov function) can be used to show that (\ref{eq:alg1})-(\ref{eq:alg2}) converges to the proportional fair low-delay rate allocation provided step sizes $K_1, K_2$ are selected sufficiently small.

\section{Online Feedback Algorithm For Proportional Fair Low-Delay Rate Allocation}
The offline rate update (\ref{eq:alg1})-(\ref{eq:alg3}) lends itself directly to online feedback-based implementation by appropriately substituting the {measured} aggregation level $\measmean{\vv{N}}{}(k)=(\measmean{{N}}{_1}(k),\dots,\measmean{{N}}{_n}(k))$ observed in received frames over slot $k$, where recall from (\ref{eq:meas}) that $\measmean{{N}}{_i}(k)$ is the empirical mean number of packets per frame in frames received by client station $i$ in time slot $k$.   In the rest of this section we consider the resulting feedback controller in more detail.

\subsection{Inner Loop PI-Controller}
Replacing $\vv{z}(k)$ on the RHS of offline update (\ref{eq:alg1}) with the vector $\measmean{\vv{N}}{}(k)$ of measured WLAN aggregation levels gives online update
\begin{align}
\vv{z}(k+1) &= \vv{z}(k) +K_1 \vv{e}(k)\label{eq:inner1}
\end{align}
where vector $\vv{e}(k)=\vv{N}_{target}(k)-\measmean{\vv{N}}{}(k)$ is the ``error'' between the vector of desired aggregation levels $\vv{N}_{target}(k)$ and the actual aggregation levels $\measmean{\vv{N}}{}(k)$, with $\vv{N}_{target}(k)= \min\{\nu(k)\vv{W},\bar{N}\}$.   Importantly, observe that update (\ref{eq:inner1}) is an integral controller that adjusts $\vv{z}$ to try to regulate error $\vv{e}$ at zero.  Namely, when element $e_i$ of vector $\vv{e}$ is $>0$ then $z_i$ is increased which in turn tends to increase the aggregation level $\measmean{\vv{N}}{_i}$ of transmissions to station $i$ and so decrease error $e_i$, bringing it back towards zero.  Conversely, when $e_i<0$ then $z_i$ is decreased which tends to decrease the aggregation level and once again bring $e_i$ back towards zero.    The feedback from the measured aggregation level therefore tends to compensate for deviations in the WLAN from desired behaviour, intuitively making the WLAN behaviour robust to model uncertainty (we analyse robustness more formally below).

\begin{figure}
\centering
\includegraphics[width=0.8\columnwidth]{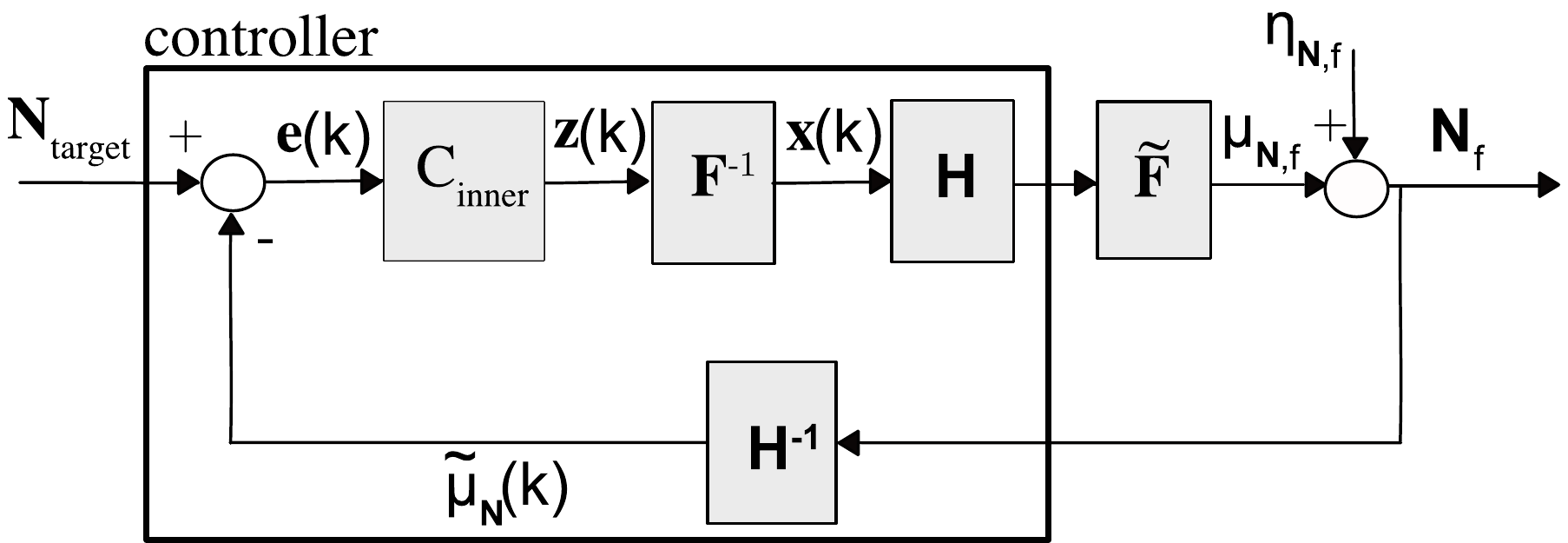}
\caption{Schematic of inner feedback loop corresponding to update ({eq:inner1}).  Controller updates occur at the start of time slots.  Nonlinear function $\vv{F}^{-1}$ is given by model (\ref{eq:inv}), $\vv{H}$ is a zero-order hold (i.e. holds the packet send rate constant at $\vv{x}(k)$ for packets sent during slot $[0,\Delta k)$), $\mean{\vv{N}}{_f}$ is the WLAN mean frame aggregation level, $\noise{\vv{N}_f}$ is measurement noise, $\vv{H}^{-1}$ maps from the individual frame aggregation levels to the empirical mean $\measmean{\vv{N}}{}(k)$ of the frame aggregation level over a slot.  }\label{fig:inner}
\end{figure}

\textit{Nonlinear Mixed Discrete/Continuous-Time Feedback Loop:} Figure \ref{fig:inner} shows schematically the feedback loop corresponding to (\ref{eq:inner1}).   Since the controller runs in discrete-time and the WLAN in continuous-time we need to map between the two using operators $\vv{H}$ and $\vv{H}^{-1}$.   $\vv{H}$ holds the sender inter-packet time equal to $1/\vv{x}(k)$ during slot $k$.  $\vv{H}^{-1}$ maps from the observed frame aggregation levels ovcer slot $k$ to the empirical average aggregation level $\measmean{\vv{N}}{}(k)$.   Nonlinear function $\vv{\tilde{F}}$ is the actual WLAN mapping from send rate to mean aggregation level and $\vv{\eta}_{\vv{N},f}$ the measurement noise induced by the use of the emprical mean $\measmean{\vv{N}}{}(k)$ rather than the actual mean.  $C_{inner}$ denotes the dynamic mapping (\ref{eq:inner1}) from $\vv{e}$ to $\vv{z}$.

\textit{Linearising Action of Inner-Loop Controller:}
Due to the nonlinear functions $\vv{F}^{-1}$ and $\vv{\tilde{F}}_k$ shown in Figure \ref{fig:inner} the update (\ref{eq:inner1}) is nonlinear in general.    However, when the model is exact then $\mean{\vv{N}}{_f}=\vv{\tilde{F}}_k(\vv{F}^{-1}(\vv{z}(k)))=\vv{z}(k)$ and the resulting update $\vv{z}(k+1)=\vv{z}(k)+K_1(\vv{N}_{target}-\vv{z}(k))$ is linear.  That is, the controller transforms the nonlinear system to have first-order linear dynamics.   More generally, we hope that our model is approximately correct so that $\vv{F}\approx\vv{\tilde{F}}_k$ and the inverse function $\vv{F}^{-1}$ in the controller tends to compensate for the nonlinearity $\vv{\tilde{F}}_k$ in the mapping from send rate to aggregation level.


\subsection{Outer Loop Controller}
Rewrite update (\ref{eq:alg2}) as
\begin{align}
\nu(k+1)&=[\nu(k)+K_2e_2(k)]^+\label{eq:outer1}
\end{align}
with $e_2(k)=\min\{\bar{T}x_{1}(k), \bar{N}\}-\nu(k)$ and $x_{1}(k)$ equal to the first element of vector $\vv{F}^{-1}(\vv{z}(k))$.   Observe that (\ref{eq:outer1}) again takes the form of an integral controller which tries to regulate $e_2(k)$ at zero i.e. to make $\nu(k)$ track $\min\{\bar{T}x_{1}(k), \bar{N}\}$.   

The input to the inner-loop controller is $\vv{N}_{target}(k)= \min\{\nu(k)\vv{W},\bar{N}\}$ and the combination of the inner-loop controller (\ref{eq:inner1}) with outer-loop controller (\ref{eq:outer1}) gives the  setup shown schematically in Figure \ref{fig:outer}.   It can be seen that we ``bootstrap'' from the inner loop and use $\min\{\bar{T}x_{1}(k), \bar{N}\}$ as the set point for outer loop control variable $\nu(k)$.  We then map from $\nu(k)$ to the target aggregation level $N_{target}$ using $\nu(k)\vv{W}$.     


\begin{figure}
\centering
\includegraphics[width=\columnwidth]{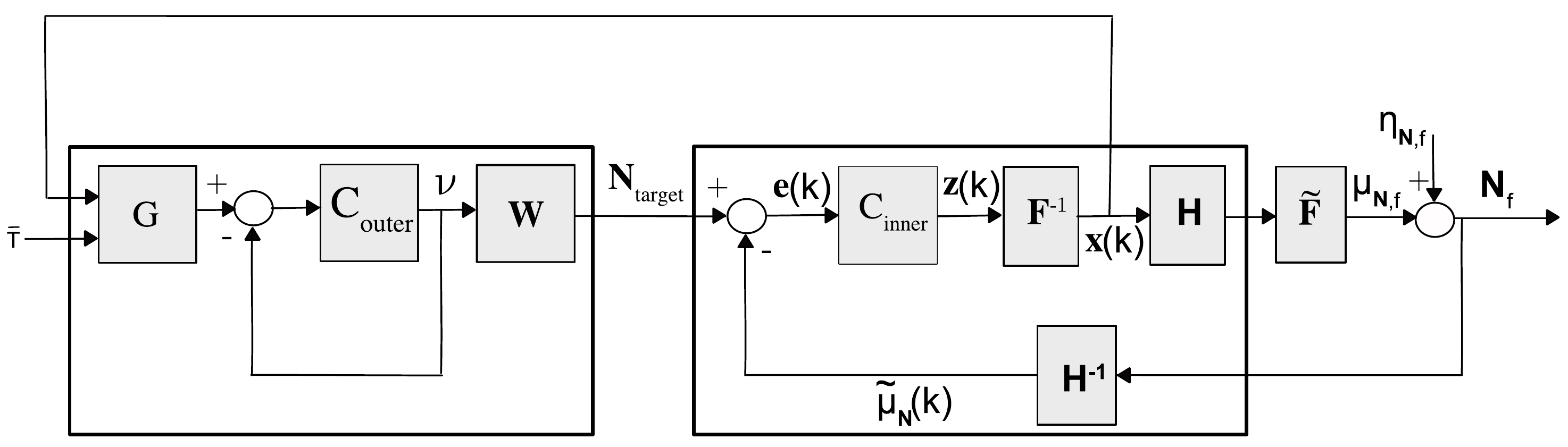}
\caption{Schematic of coupled feedback loops.   $\bar{T}$ is the target delay, $G(\vv{x}(k))=\min\{\bar{T}x_{1}(k), \bar{N}\}$, $C_{outer}$ denotes the outer control update (\ref{eq:outer1}).  Other quantities are as in Figure \ref{fig:inner}. 
}\label{fig:outer}
\end{figure}

\subsection{Closed-Loop Stability}
\subsubsection{Inner-Loop}
Recall that the main source of model uncertainty is parameter $c$.  That is, $\vv{F}(\vv{x})=\frac{c\vv{x}}{1-\vv{w}^T\vv{x}}$ whereas to a good approximation $\vv{\tilde{F}}_k(\vv{x})=\P\frac{\tilde{c}(k)\vv{x}}{1-\vv{w}^T\vv{x}}$ with $\tilde{c}_k\ne c$ (recall projection $\Pi$ captures the saturation constraint that $\measmean{\vv{N}}{}(k)\in[1,N_{max}]$).   Hence, $\vv{\tilde{F}}_k(\vv{F}^{-1}(\vv{z}(k)))=\P(\frac{\tilde{c}(k)}{c}\vv{z}(k))$ and dynamics (\ref{eq:inner1}) are  
\begin{align}
\vv{z}(k+1)&=\vv{z}(k)+K_1(\vv{N}_{target}-\Gamma(k)\vv{z}(k)) \label{eq:loop2}
\end{align}
where $\Gamma(k)=diag\{\gamma_1(k),\dots,\gamma_n(k)\}$ and $\gamma_i(k)=\frac{\P(\tilde{c}(k)z_i(k)/c)}{z_i(k)}$.

Neglecting the input $\vv{N}_{target}$ for the moment, it is easy to see\footnote{Try candidate Lyapunov function $V(k)=\vv{z}^T(k)\vv{z}(k)$.  Then $V(k+1)=(I-\Gamma(k))^T(I-\Gamma(k))V(k)$ (since $\Gamma(k)$ is diagonal) and so is strictly decreasing when $0< \gamma_i(k)<2$.} that the dynamics $\vv{z}(k+1)=(I-\Gamma(k))\vv{z}(k)$ are exponentially stable provided $0< \gamma_i(k)<2$ for all $i=1,\dots,n$.   Note that this stability holds for arbitrary time-variations in the $\gamma_i(k)$.  Projection $\Pi$ satisfies $0\le \frac{\P z}{z}\le1$ and $\tilde{c}(k),c$ are both non-negative, so for stability it is sufficient that $\tilde{c}(k)/c<2$.   This condition is also necessary since for constant $\tilde{c}(k)$ the  system will be unstable if this condition is violated.   

In summary, time-variations in the $\gamma_i(k)$ affect stability in a benign fashion and control parameter $c$ can safely be larger than the (uncertain) plant gain $\tilde{c}(k)$ (as this reduces the loop gain) but should not be too much smaller (since this increases the loop gain).  


\subsubsection{Outer-Loop}
The overall closed-loop system dynamics (\ref{eq:outer1})-(\ref{eq:loop2}) can be rewritten equivalently as,
\begin{align}
\nu(k+1)&=[\nu(k)-K_2(\nu(k)-G(\vv{F}^{-1}(\vv{z}(k))))]^+ \label{eq:outer3}\\
\vv{z}(k+1) &=\vv{z}(k)+K_1(\min\{\nu(k)\vv{W},\bar{N}\}-\vv{\Gamma}_k\vv{z}(k))  \label{eq:outer4}
\end{align}
where $G(\vv{x}(k))=\min\{\bar{T}x_{1}(k), \bar{N}\}$.  Assume the dynamics of the inner $\vv{z}$ loop are much faster than those of the outer $\nu$ loop (e.g. by selecting $K_2\ll K_1$) so that $\vv{z}(k)=\nu(k)\vv{W}$.   Then the system dynamics simplify to

\small
\begin{align}
&\nu(k+1)\notag\\
&=[\nu(k)-K_2(\nu(k)-\min\{\bar{T}\frac{\nu(k)}{c+\nu(k)nw_1},\bar{N}\})]^+\\
&=[\nu(k)-K_2(\nu(k)-\gamma_0(k)\bar{T}\frac{\nu(k)}{c+\nu(k)nw_1})]^+\\
&=[(1-K_2\frac{c+\nu(k)nw_1-\gamma_0(k)\bar{T}}{c+\nu(k)nw_1})\nu(k)]^+ \label{eq:insight}
\end{align}
\normalsize
where $0<\gamma_0(k)\le 1$ captures the impact of the $\bar{N}$ constraint i.e $\gamma_0(k)$ equals $1$ when $\bar{T}\frac{\nu(k)}{c+\nu(k)nw_1}\le \bar{N}$ and decreases as $\bar{T}\frac{\nu(k)}{c+\nu(k)nw_1}$ increases above $\bar{N}$.  We have also used the fact that $\vv{w}^T\vv{W}=nw_1$.   

We can gain useful insight into the behaviour of the system dynamics from inspection of (\ref{eq:insight}).   Namely, ignoring the constraints for the moment (i.e. $\gamma_0(k)=1$ and $\nu(k)\ge 1$) and assuming that $0<K_2<1$ then it can be seen that when $c+\nu(k)nw_1-\bar{T}<0$ then $1-K_2\frac{c+\nu(k)nw_1-\bar{T}}{c+\nu(k)nw_1})>1$ and so $\nu(k+1)$ increases (since $\nu(k)\ge 1$).  Hence $c+\nu(k)nw_1-\bar{T}$ increases until it equals 0 or becomes positive.  Conversely, when $c+\nu(k)nw_1-\bar{T}>0$ then $1-K_2\frac{c+\nu(k)nw_1-\bar{T}}{c+\nu(k)nw_1})<1$ and $\nu(k+1)$ decreases.  Hence, $c+\nu(k)nw_1-\bar{T}$ decreases until it equals 0 to becomes negative.  That is, the dynamics (\ref{eq:insight}) force $c+\nu(k)nw_1-\bar{T}$ to either converge to 0 or oscillate about 0.   

With the above in mind the impact of the constraints is now easy to see.   When $\bar{T}>c+\bar{N}nw_1$ then the delay target is hit at an aggregation level above $\bar{N}$.  It can be seen that $c+\nu(k)nw_1-\bar{T}<0$ for all admissible $\nu(k)$ and so $\nu(k)$ increases until it equals $\bar{N}$.   When $\bar{T}<c+nw_1$ then the target delay is violated even when the aggregation level is the minimum possible $\nu(k)=1$.  It can be seen that $c+\nu(k)nw_1-\bar{T}>0$ for all admissible $\nu(k)$ and so $\nu(k)$ decreases until it equals $1$.

To establish stability we need to show that persistent oscillations about $c+\nu(k)nw_1-\bar{T}=0$ cannot happen.  We have the following lemma:
\begin{lemma}\label{lem:two}
Suppose gain $0<K_2<1$ and initial condition $1\le \nu(1)\le\bar{N}$.  Then for the dynamics (\ref{eq:insight}) we have: (i) when $c+nw_1<\bar{T}<c+\bar{N}nw_1$ then $\nu(k)$ converges to $(\bar{T}-c)/(nw_1)$, (ii) when $\bar{T}\ge c+\bar{N}nw_1$ then $\nu(k)$ converges to upper limit $\bar{N}$ and (iii) when $\bar{T}<c+nw_1$ then $\nu(k)$ converges to lower limit $1$.
\end{lemma}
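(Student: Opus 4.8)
The plan is to read the scalar recursion (\ref{eq:insight}) as a fixed-point iteration $\nu(k+1)=\Psi(\nu(k))$ with $\Psi(\nu):=\bigl[(1-K_2)\nu+K_2\phi(\nu)\bigr]^+$ and $\phi(\nu):=\min\{\bar{T}\nu/(c+nw_1\nu),\bar{N}\}$, and to exploit the fact that $0<K_2<1$ makes $\Psi$ monotone; this is exactly what rules out the persistent oscillations flagged just before the lemma. First I would check that $[1,\bar{N}]$ is forward invariant: the outer $[\,\cdot\,]^+$ gives $\nu(k+1)\ge1$, and since $\phi(\nu)\le\bar{N}$ and $\nu(k)\le\bar{N}$ we get $(1-K_2)\nu(k)+K_2\phi(\nu(k))\le\bar{N}$, so (as $\bar{N}\ge1$) the projection does not break the upper bound. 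Hence the iterates remain in $[1,\bar{N}]$ for all $k$.

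Next I would record two structural facts on $[1,\bar{N}]$: (a) $\phi$ is continuous and non-decreasing, because its unsaturated branch $\bar{T}\nu/(c+nw_1\nu)$ has positive derivative $\bar{T}c/(c+nw_1\nu)^2$ and taking a min with the constant $\bar{N}$ preserves monotonicity; consequently $\Psi$ is continuous and, since $1-K_2>0$, strictly increasing. (b) Whenever the projection is inactive, $\Psi(\nu)-\nu=K_2(\phi(\nu)-\nu)$, and on the unsaturated branch $\bar{T}\nu/(c+nw_1\nu)-\nu=-\nu\,(c+nw_1\nu-\bar{T})/(c+nw_1\nu)$, so the sign of $\Psi(\nu)-\nu$ is controlled by the sign of $c+nw_1\nu-\bar{T}$ — precisely the quantity whose behaviour was discussed informally above. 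Because $\Psi$ is monotone and $[1,\bar{N}]$ is invariant, the sequence $\nu(k)$ is itself monotone (non-decreasing if $\nu(2)\ge\nu(1)$, non-increasing otherwise); being bounded it converges, and by continuity of $\Psi$ its limit is a fixed point of $\Psi$ in $[1,\bar{N}]$.

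It then remains to pin down, in each regime, which fixed point the limit must be. In case (i), $c+nw_1<\bar{T}<c+\bar{N}nw_1$ (which also forces $\bar{N}>1$): the point $\nu^\star=(\bar{T}-c)/(nw_1)$ lies in $(1,\bar{N})$ and satisfies $\phi(\nu^\star)=\nu^\star$; moreover $\Psi(\nu)>\nu$ for $\nu<\nu^\star$ and $\Psi(\nu)<\nu$ for $\nu>\nu^\star$, and one checks neither $1$ nor $\bar{N}$ is fixed here, so $\nu^\star$ is the unique fixed point in $[1,\bar{N}]$ and the limit equals it. In case (ii), $\bar{T}\ge c+\bar{N}nw_1$ gives $c+nw_1\nu<\bar{T}$ for every $\nu\in[1,\bar{N})$, hence $\phi(\nu)>\nu$ there and $\Psi(\nu)\ge\nu$ on all of $[1,\bar{N}]$ with equality only at $\bar{N}$; the sequence is therefore non-decreasing with unique fixed point $\bar{N}$. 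In case (iii), $\bar{T}<c+nw_1\le c+nw_1\nu$ for every $\nu\ge1$, so $\bar{T}\nu/(c+nw_1\nu)<\nu$, hence $\phi(\nu)<\nu$ and $\Psi(\nu)\le\nu$ on $[1,\bar{N}]$ with $\Psi(1)=1$; the sequence is non-increasing with unique fixed point $1$.

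The step requiring the most care is the last one — the fixed-point bookkeeping in the presence of the two nonlinearities $\min\{\cdot,\bar{N}\}$ and $[\,\cdot\,]^+$ — where one must verify case by case that exactly one of the three candidates $(\bar{T}-c)/(nw_1)$, $\bar{N}$, $1$ lies in $[1,\bar{N}]$ and is genuinely fixed; once the monotonicity of $\Psi$ (the single place where $0<K_2<1$ is essential) has been used to obtain monotone, convergent iterates, no further estimates are needed.
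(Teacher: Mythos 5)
Your proof is correct, but it follows a genuinely different route from the paper's. The paper proves case (i) with the quadratic Lyapunov function $V(k)=(\nu(k)-\nu^*)^2$, $\nu^*=(\bar{T}-c)/(nw_1)$, establishing the per-step contraction $V(k+1)\le \bigl(1-K_2\tfrac{\nu(k)nw_1}{c+\nu(k)nw_1}\bigr)^2V(k)$ and handling the $\bar{N}$-saturation by an induction showing $\gamma_0(k)=1$ and $\nu(k)\le\bar{N}$ for all $k$; cases (ii) and (iii) are then dispatched by observing that $c+\nu nw_1-\bar{T}$ has a fixed sign on the admissible interval. You instead read the update as iteration of the continuous, non-decreasing map $\Psi$ on the forward-invariant interval $[1,\bar{N}]$, deduce that the iterates form a bounded monotone sequence, and identify the unique fixed point in each regime. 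Both arguments are sound. The Lyapunov route buys an explicit geometric convergence rate (since $\nu(k)\ge 1$ bounds the contraction factor away from $1$), which your monotone-map argument does not provide; it is also the natural argument if one wants to accommodate a time-varying or only partially characterised gain $\gamma_0(k)$. Your route buys a single uniform mechanism that treats all three cases and both saturations ($\min\{\cdot,\bar{N}\}$ and $[\cdot]^+$) at once, makes completely explicit the one place $0<K_2<1$ is used (monotonicity of $\Psi$; the paper's contraction step alone tolerates $K_2<2$ but needs $K_2<1$ for sign preservation), and is actually more careful on cases (ii)--(iii), where the paper's one-line sign argument implicitly relies on the boundary being the only accumulation point while your uniqueness-of-fixed-point step proves it. One trivial slip: $\Psi$ is only non-decreasing, not strictly increasing, on the region where the lower projection $[\cdot]^+$ is active, but non-decreasingness is all your monotone-iterate argument requires.
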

\begin{proof}
Case(i): $c+nw_1<\bar{T}<c+\bar{N}nw_1$.  Try candidate Lyapunov function $V(k)=(c+\nu(k)nw_1-\bar{T})^2/(nw_1)^2$.   Letting $\nu^*=(\bar{T}-c)/nw_1$ then this can be rewritten as $V(k)=(\nu(k)-\nu^*)^2$ and since $c+nw_1<\bar{T}<c+\bar{N}nw_1$ then $1<\nu^*<\bar{N}$.  In addition, to take care of gain $\gamma_0(k)$ we will show by induction that $\gamma_0(k)=1$.  By assumption $1<\bar{T}\nu(1)/(c+\nu(1)nw_1)\le \bar{N}$ and so $\gamma_0(1)=1$.   Suppose $\gamma_0(k)=1$. Substituting from (\ref{eq:insight}) it follows that
\begin{align*}
&V(k+1)=(\nu(k+1)-\nu^*)^2\\
&=(\max\{(1-K_2\frac{c+\nu(k)nw_1-\bar{T}}{c+\nu(k)nw_1})\nu(k),1\}-\nu^*)^2\\
&\stackrel{(a)}{\le}((1-K_2\frac{c+\nu(k)nw_1-\bar{T}}{c+\nu(k)nw_1})\nu(k)-\nu^*)^2\\
&=((1-K_2\frac{(c-\bar{T})/nw_1+\nu(k)}{c+\nu(k)nw_1}nw_1)\nu(k)-\nu^*)^2\\
&=(\nu(k)-K_2\frac{\nu(k)-\nu^*}{c+\nu(k)nw_1}nw_1\nu(k)-\nu^*)^2\\
&\stackrel{(b)}{=}(1-K_2\frac{\nu(k)nw_1}{c+\nu(k)nw_1})^2V(k)
\end{align*}
where $(a)$ follows because $\nu^*>1$.
Since $0<K_2<2$ and $0<\frac{\nu(k)nw_1}{c+\nu(k)nw_1}< 1$ it follows from $(b)$ that $0<(1-K_2\frac{\nu(k)nw_1}{c+\nu(k)nw_1})^2<1$ and so $V(k+1)$ is strictly decreasing unless $V(k)=0$.    Further, since $K_2<1$ then $\nu(k+1)$ has the same sign as $\nu(k)$ i.e. $\nu(k+1)>0$.  Putting these observations together, we have that $\nu(k+1)$ is closer than $\nu(k)$ to $\nu^*<\bar{N}$.   Since $\nu(1)\le \bar{N}$ then $\nu(2)<\bar{N}$, while when $\nu(k)\le \bar{N}$ for $k>1$ then $\nu(k+1)<\bar{N}$.  So by induction $\nu(k)\le\bar{N}$ for all $k\ge 1$ and thus $\gamma_0(k)=1$ for all $k\ge 1$.  Since $V(k+1)<V(k)$ when $V(k)>0$ then $V(k)$ decreases monotonically to $0$ i.e. the system converges to the point $c+\nu(k)nw_1-\bar{T}=0$ as claimed.

Cases (ii) and (iii). When $\bar{T}\ge c+\bar{N}nw_1$, respectively $\bar{T}<c+nw_1$, then $c+\nu(k)nw_1-\bar{T}<0$, respectively $c+\nu(k)nw_1-\bar{T}>0$ for all $1\le \nu(k)\le \bar{N}$.  The stated result now follows.
\end{proof}

Note that while the above analysis makes use of time-scale separation between $\vv{z}$ and $\nu$ so that $\vv{z}(k)=\nu(k)\vv{W}$, in practice we observe that the system is well behaved even when this assumption is violated and conjecture that Lemma \ref{lem:two} also applies in such cases.

\subsection{Selecting Controller Gains}
\subsubsection{Control Gain $K_1$}
\begin{figure}
\centering
\subfigure[$n=1$ station]{
\includegraphics[width=0.46\columnwidth]{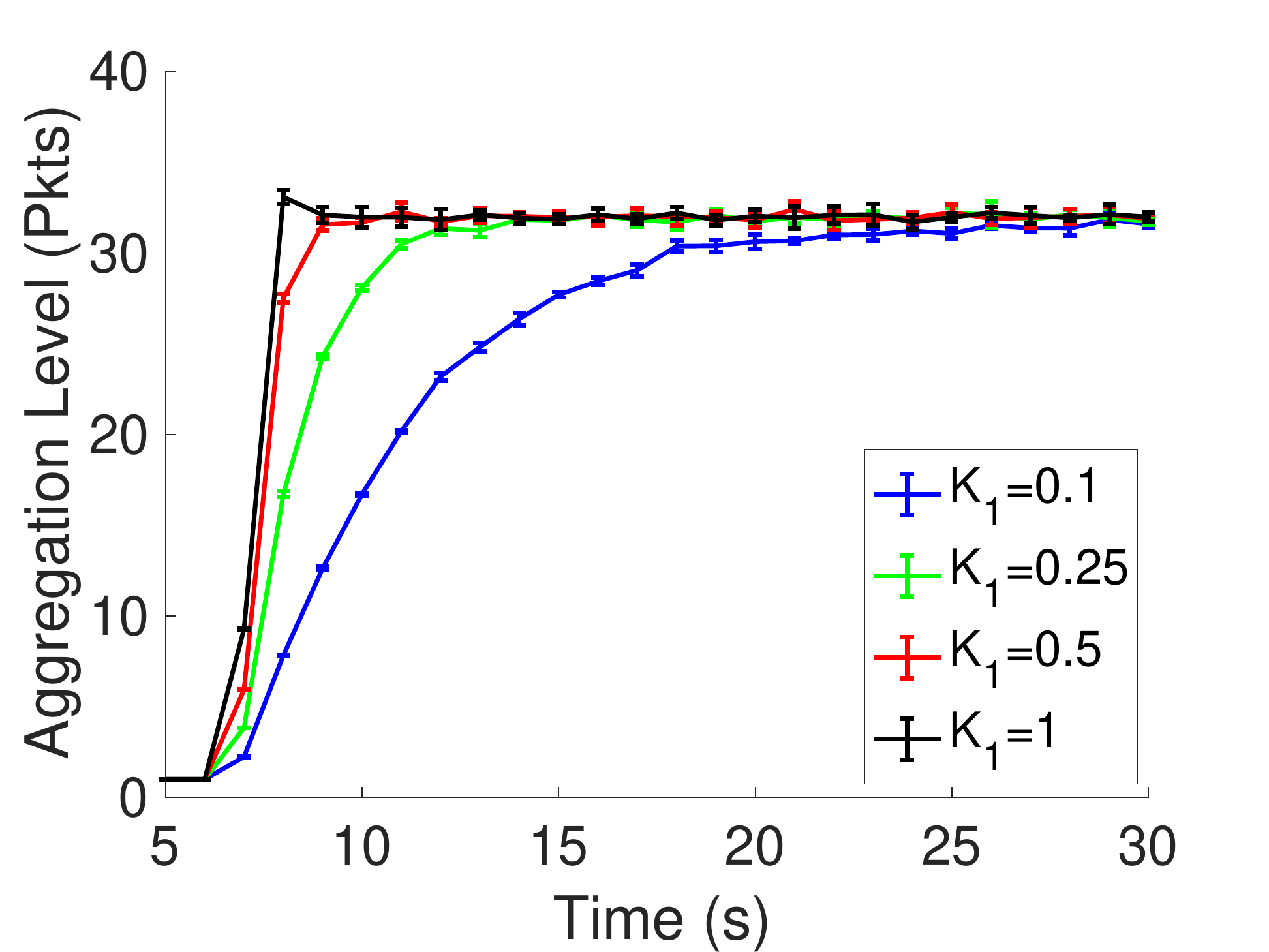}
}
\subfigure[$n=10$ stations]{
\includegraphics[width=0.46\columnwidth]{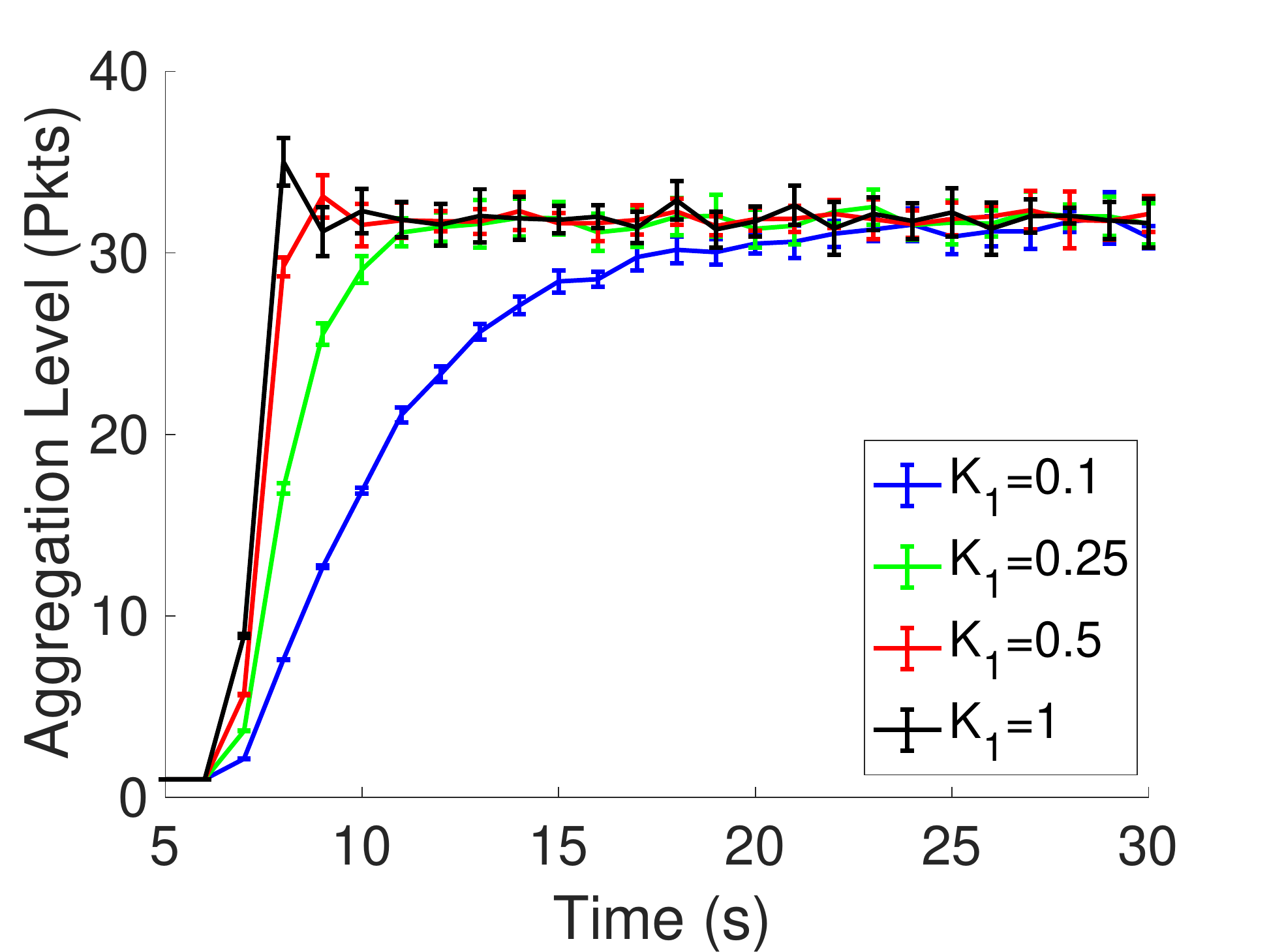}
}
\subfigure[$n=1$ station]{
\includegraphics[width=0.46\columnwidth]{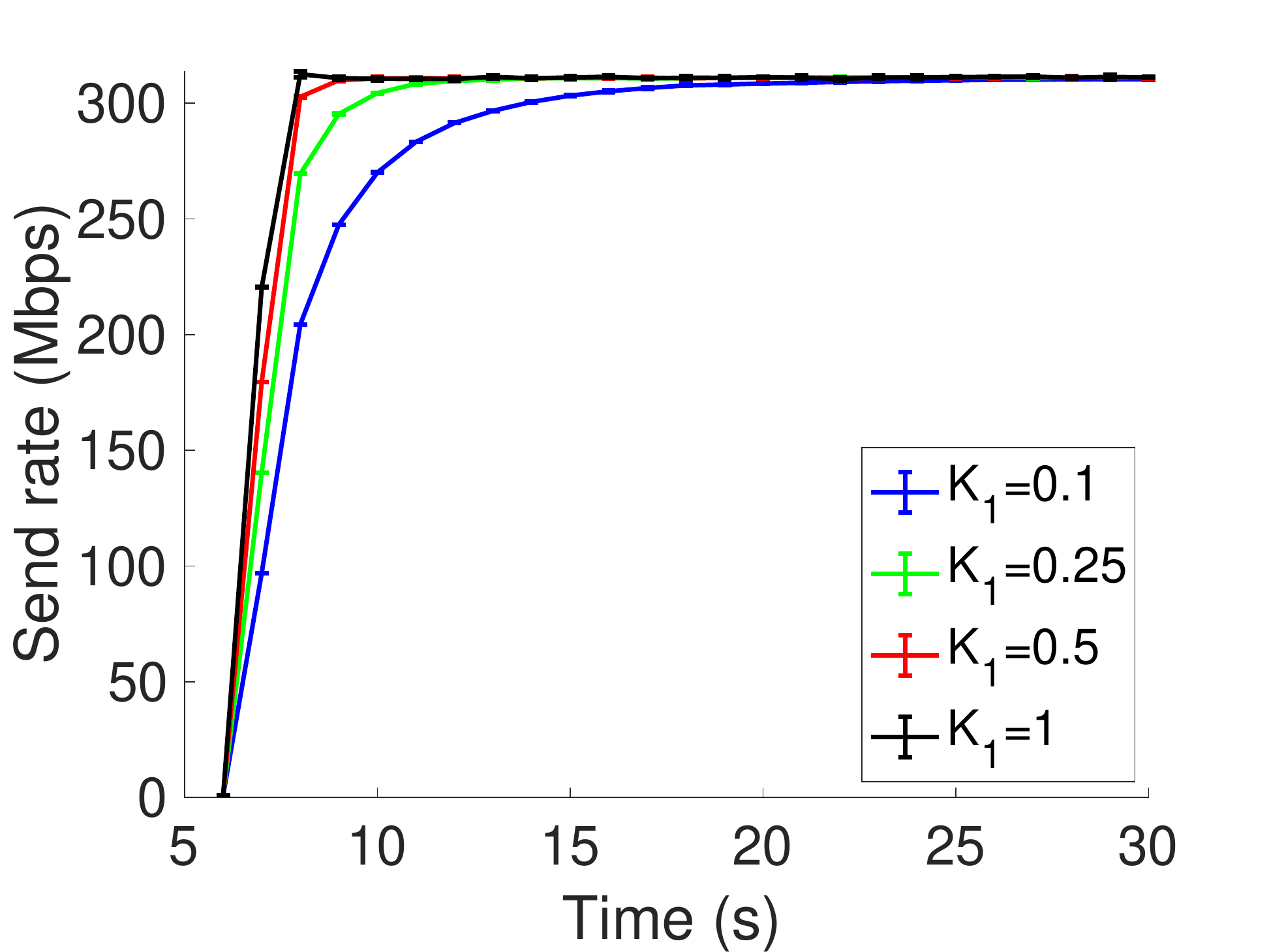}
}
\subfigure[$n=10$ stations]{
\includegraphics[width=0.46\columnwidth]{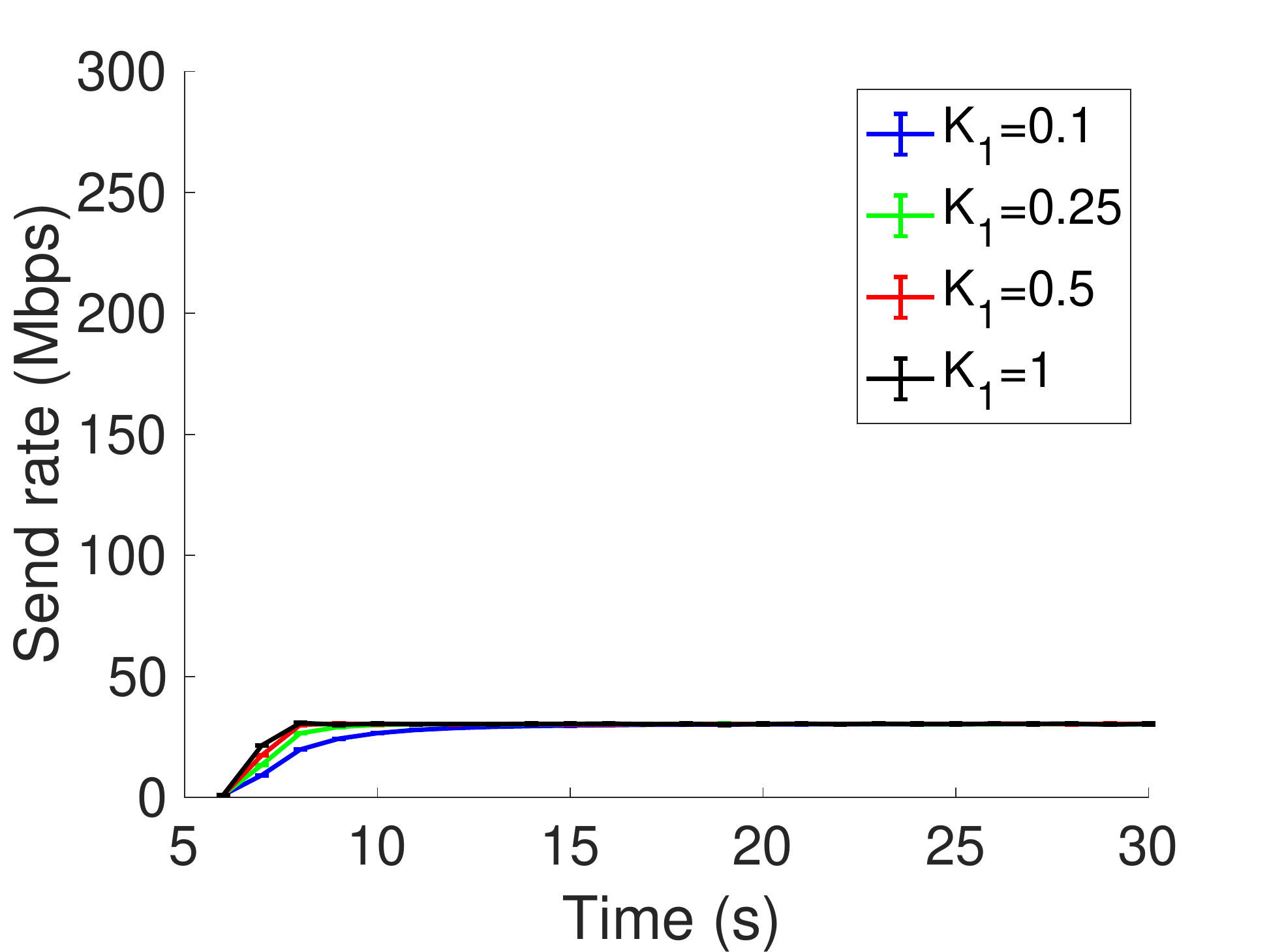}
}
\caption{Impact of control gain $K_1$ on transient dynamics of aggregation level and send rate.  Plots show average and standard deviation over 10 runs for each value of gain.  NS3 simulation, setup as in Section \ref{sec:expt}: $N_{target}=32$, NSS=1, MCS=9.  }\label{fig:two}
\end{figure}

Figure \ref{fig:two} plots the measured step response of the system aggregation level and send rate $x$ as the gain $K_1$ and number of stations $n$ are varied.  This data is for a detailed packet-level simulation, see Section \ref{sec:expt} for details.   It can be seen from Figures \ref{fig:two}(a)-(b) that, as expected, the aggregation level convergence time falls as $K_1$ is increased although the response starts to become oscillatory for larger values of $K_1$.   It can also be seen from these figures that that step response is effectively invariant with the number of stations due to the linearising action of the controller.    Figures \ref{fig:two}(c)-(d) show the send rate time histories corresponding to Figures \ref{fig:two}(a)-(b) and the impact of the nonlinearity $\tilde{F}_k$ relating aggregation level and send rate is evident with the send rate being an order of magnitude smaller for the same aggregation level with $n=10$ stations compared to with $n=1$ station.  Similar results are obtained when the MCS is varied.


 

In the remainder of this paper we select $K_1=0.5$ unless otherwise stated since this strikes a reasonable balance between response time and robustness to uncertainty in $c$ (with $K_1=0.5$ the value of $c$ can be out by a factor of 4, corresponding to a gain margin of 12 dB, and the system dynamics will remain stable).

%

\subsubsection{Control Gain $K_2$}
\begin{figure}
\centering
\subfigure[MCS 2, $\bar{T}=2.5ms$]{
\includegraphics[width=0.46\columnwidth]{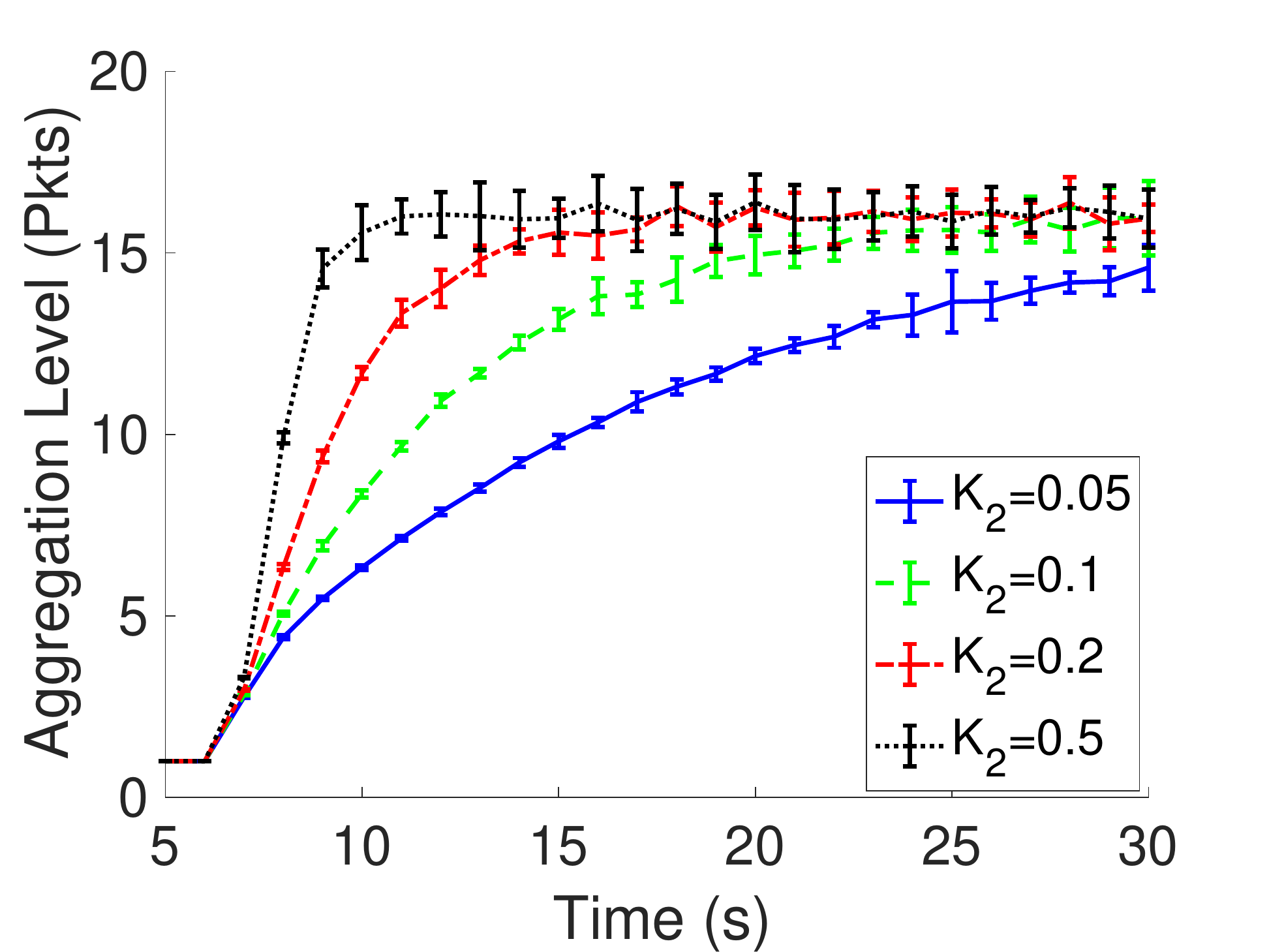}
}
\subfigure[$K_2=0.2$, $\bar{T}=2.5ms$]{
\includegraphics[width=0.46\columnwidth]{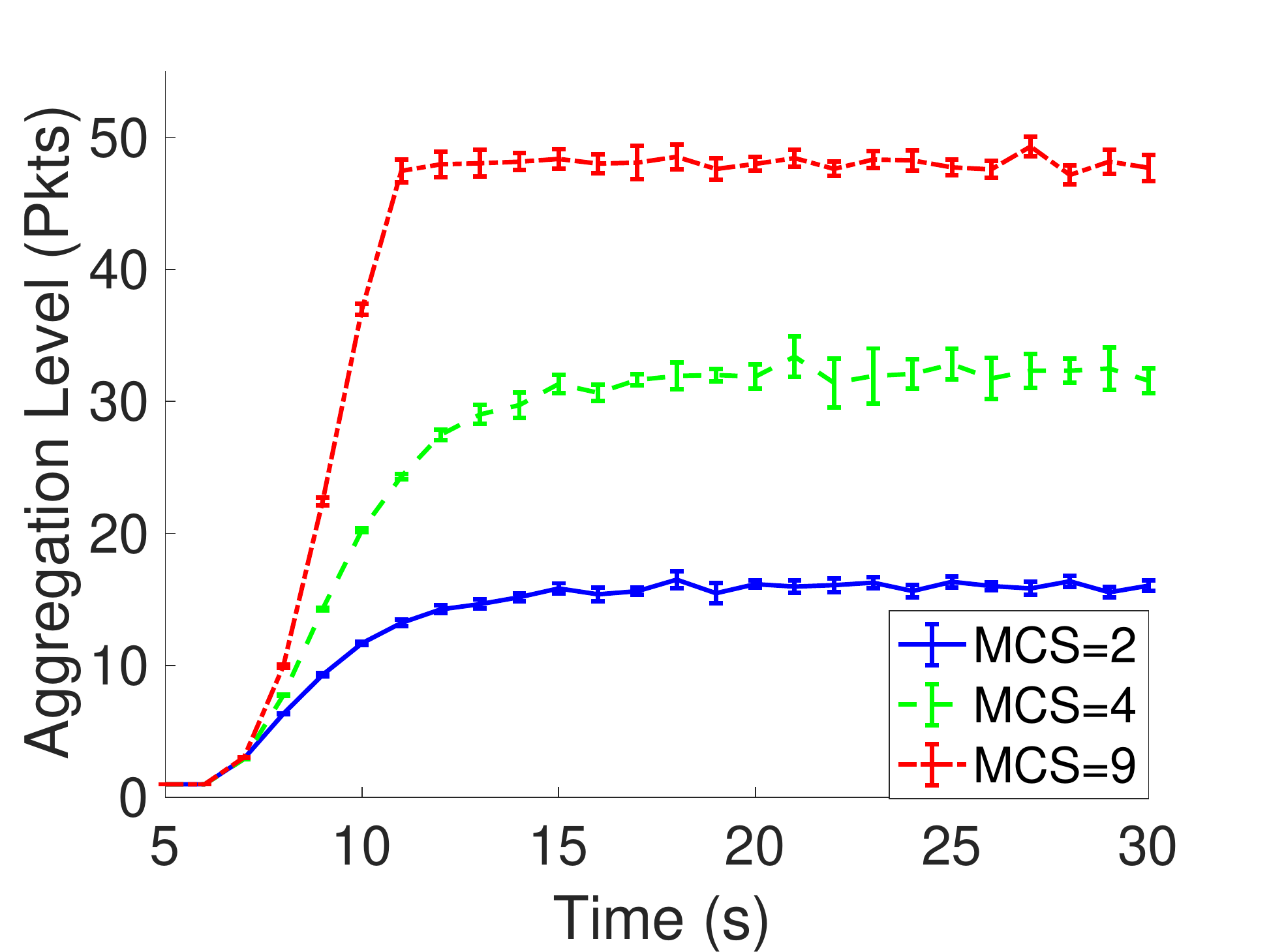}
}
\subfigure[$K_2=0.2$, $\bar{T}=2.5ms$]{
\includegraphics[width=0.46\columnwidth]{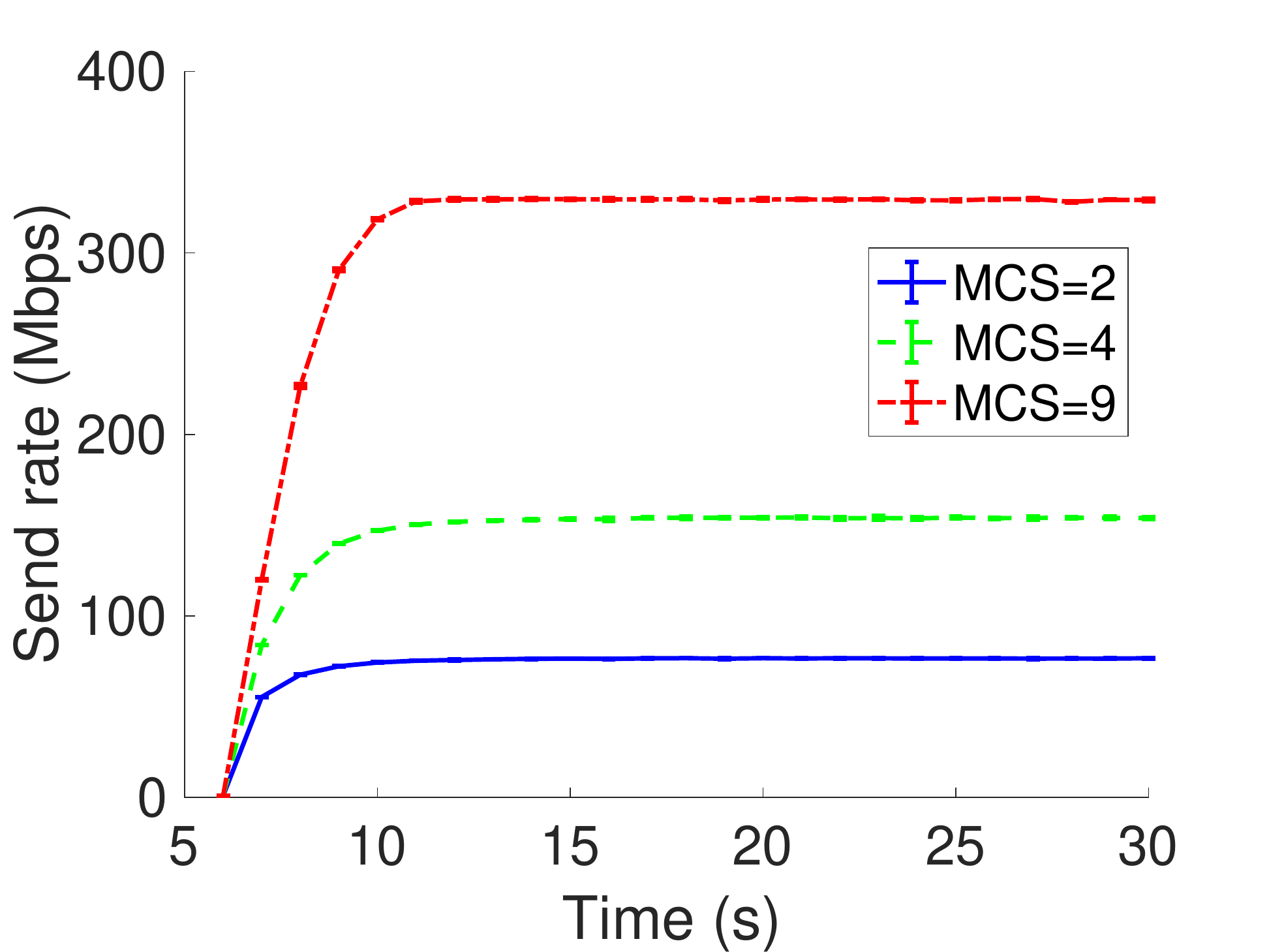}
}
\subfigure[$K_2=0.2$, $\bar{T}=2.5ms$]{
\includegraphics[width=0.46\columnwidth]{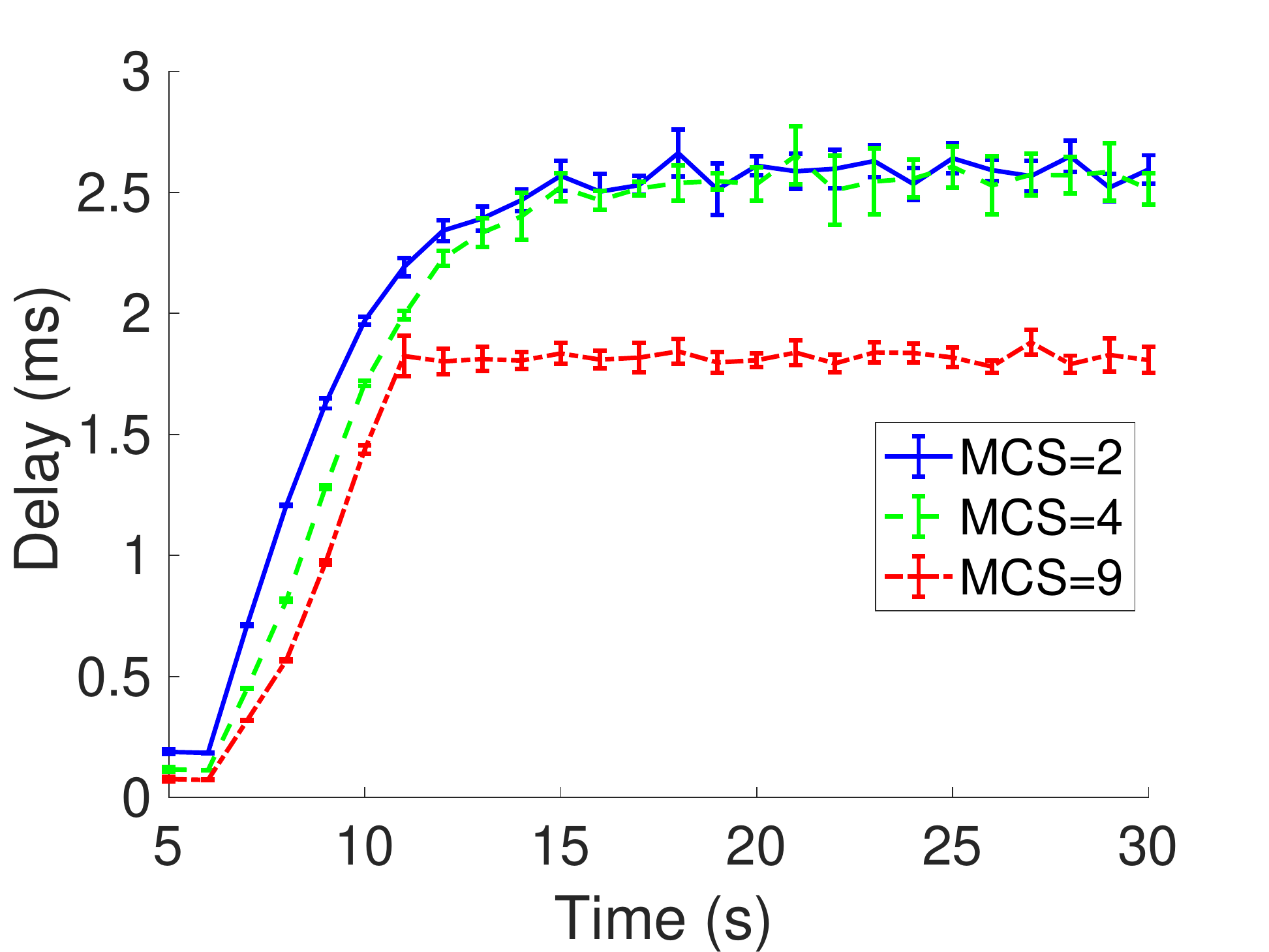}
}
\caption{(a) Impact of outer loop gain $K_2$ on convergence time, (b) adapting $N_{target}$ to regulate delay to below $\bar{T}$ as MCS is varied, (c), (d) send rate and delay measurements corresponding to (b).  Plots show average and standard deviation over 10 runs for each value of gain.   NS3, one client station, NSS=1, $\bar{T}=2.5ms$, $\bar{N}=48$.}\label{fig:three}
\end{figure}

Figure \ref{fig:three}(a) plots the measured step response of the system aggregation level as the outer control gain $K_2$ is varied.   It can be seen that the rise time falls with increasing gain, as expected.   Although not shown on the plot to reduce clutter, we observe that for $K_2\ge 1$ the response becomes increasing oscillatory suggesting that the sufficient condition for stability $K_2<1$ is in fact the stability boundary.    In the rest of the paper we select $K_2=0.2$ as striking a reasonable compromise between responsiveness and robustness to uncertainty.

Figures \ref{fig:three}(b)-(d) illustrate the adaptation by the outer feedback loop of $N_{target}$ so as to regulate the delay about the target value $\bar{T}$.   Figure \ref{fig:three}(b) plots the aggregation level vs time, Figure \ref{fig:three}(c) the send rate and Figure \ref{fig:three}(d) the delay.   Measurements are shown for three MCS values.  It can be seen that as the MCS rate increases both the aggregation level and send rate increase while the delay is maintained close to the target value $\bar{T}=2.5ms$\footnote{The 802.11ac standard imposes a maximum frame duration of 5.5ms.  In these tests with a single client station we use a target delay of 2.5ms so as to avoid hitting this upper limit on frame duration and thus allow the dynamics of the feedback loop to be seen more clearly.}, as expected.   

We can quickly verify the measurements as follows.  For the network configuration in Figures \ref{fig:three}(b)-(d) fixed overhead $c$ is around $200\mu$s.  MCS index 2 with NSS=1 corresponds data rate 87.7Mbps, the packet size $l=1500$B, overhead $l_{oh}=48$B and from Figure \ref{fig:three}(b) the aggregation level is approximately 16 packets, so $\vv{w}^T\mean{\vv{N}}{}=(1500+48)\times8\times16/87.7\times10^6=2.3$ms and adding $c$ to this gives $\bar{T}=2.5$ms.  Similarly, MCS index  4 with NSS=1 corresponds to a data rate of 175.5Mbps and plugging this value into the previous expression along with aggregation level 23 packets again gives $\vv{w}^T\mean{\vv{N}}{}=2.3$ms.   MCS index 9 with NSS=1 corresponds to data rate 390Mbps.   At this data rate we hit the limit $\bar{N}=48$ packets before delay target $\bar{T}$ is reached ($\vv{w}^T\mean{\vv{N}}{}=1.5$ms when the rate is 390Mbps and the aggregation level is 48 packets, adding $c=200\mu$s to this gives a delay of 1.7ms as can be seen in Figure \ref{fig:three}(d)). 

\subsubsection{Adapting $c$}
\begin{figure}
\centering
\subfigure[]{
\includegraphics[width=0.46\columnwidth]{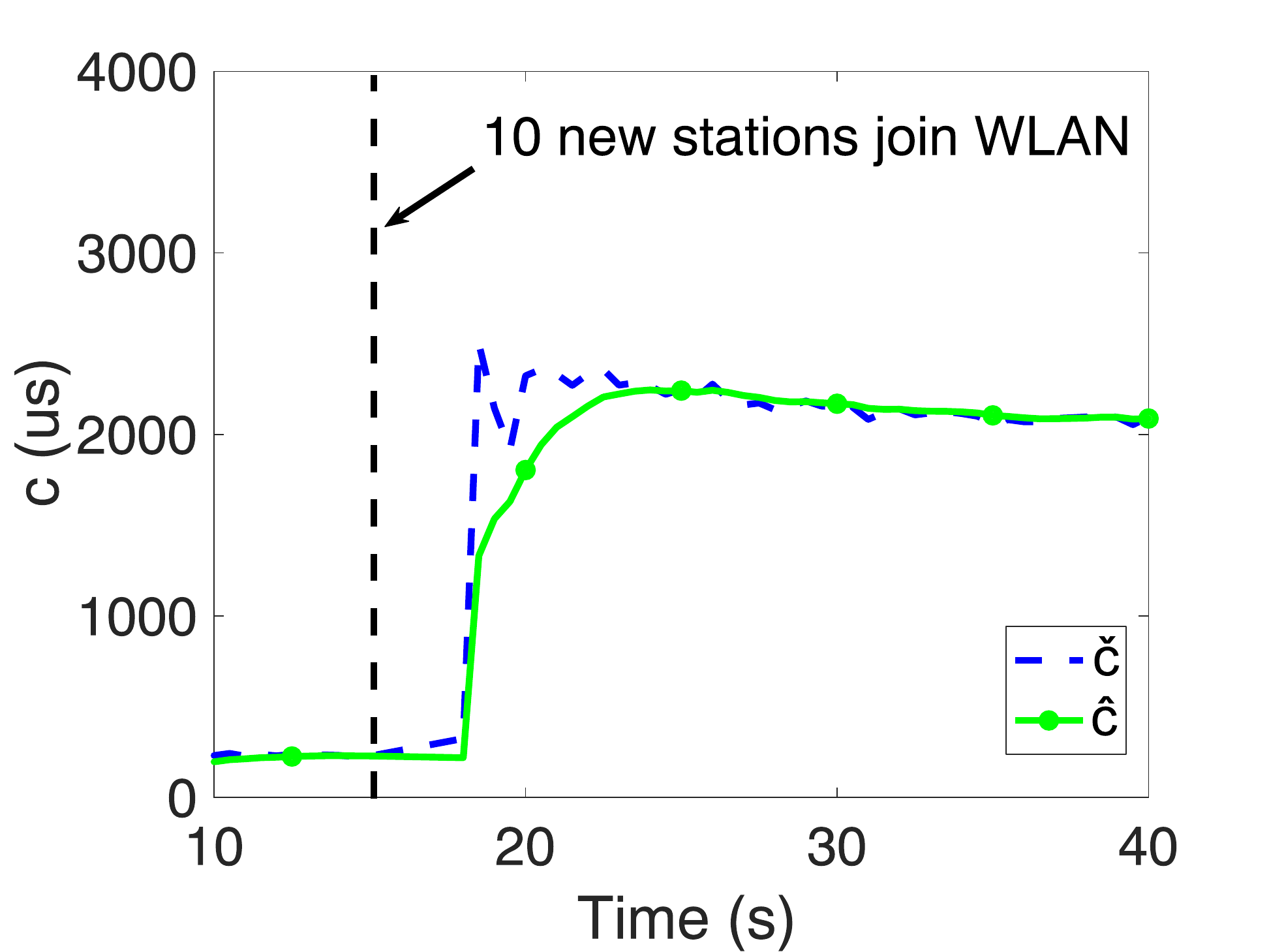}
}
\caption{Illustrating $c$ estimator (\ref{eq:cest}) tracking a sharp change in the number of stations from $n=1$ to $n=11$ at time 15s.  NS3 simulation, setup as in Section \ref{sec:expt}: one client station, $N_{target}=32$, NSS=1, MCS=9}\label{fig:disturb}
\end{figure}
The controller depends on parameter $c=n\mean{T}{_{oh}}$.  The average channel access time for each frame transmission is $CW/2\times S$ where $CW$ is the MAC contention window, typically 16 in 802.11ac, and $S$ is the MAC slot duration in seconds.  The PHY slot length is typically $9\mu s$, but the MAC slot duration can be significantly longer when other transmitters share the channel since the AP will defer access upon detecting the channel to be busy and it is this which makes it challenging to estimate $\mean{T}{_{oh}}$.  

An exact value for $c$ is not necessary since the feedback loop can compensate for uncertainty in $c$, i.e. an estimator that roughly tracks any large changes in $c$ is sufficient.  Recall that $\mean{N}{_i} = \frac{cx_i}{1-\vv{w}^T\vv{x}}$, i.e. $c=\frac{\mean{N}{_i}}{x_i}(1-\vv{w}^T\vv{x})$.  Motivated by this observation we use the following as an estimator of $c$,
\begin{align}
\hat{c}(k+1)=(1-\beta)\hat{c}(k) + \beta \check{c}(k)\label{eq:cest}
\end{align}
with $\check{c}(k):=\frac{\measmean{N}{_1}(k)}{x_1(k)}(1-\vv{w}^T\vv{x}(k))$, where $\beta$ is a design parameter which controls the window over which the moving average is calculated (a typical value is $\beta=0.05$).   

Figure \ref{fig:disturb}(b) illustrates the ability of this estimator to track a fairly significant change in the network conditions, namely 10 new stations joining the WLAN at time 15s and starting downlink transmissions.  These new stations cause a change in $c$ from a value of around $200\mu s$ to around $2200\mu s$ i.e. a change of more than an order of magnitude.  It can be seen that estimator (\ref{eq:cest}) tracks this large change without difficulty.   We observe similar tracking behaviour for changes in MCS and also when the channel is shared with other legacy WLANs.






\section{Experimental Measurements}\label{sec:expt}

\subsection{Hardware \& Software Setup}\label{sec:exptsetup}

\subsubsection{NS3 Simulator Implementation}\label{sec:ns3}
We  implemented the inner-outer controller in the NS3 packet-level simulator.  Based on the received feedbacks it periodically configures the sending rate of {\tt udp-client} applications colocated at a single node connected to an Access Point. Each wireless station receives a single UDP traffic flow at a {\tt udp-server} application that we modified to collect frame aggregation statistics and periodically transmit these to the controller at intervals of $\Delta$ ms.  We also developed a round-robin scheduler at the AP with separate queue for each destination, and we added new functions to let stations determine the MCS of each received frame together with the number of MPDU packets it contains.  The maximum aggregation level permitted is $N_{max}$=64.  We configured 802.11ac to use a physical layer operating over an $80MHz$ channel, VHT rates for data frames and legacy rates for control frames.  The PHY MCS and the number of spatial streams NSS used can be adjusted.  As validation we reproduced a number of the simulation measurements in our experimental testbed and found them to be in good agreement. {The new NS3 code and the software that we used to perform experimental evaluations are available open-source}\footnote{Code can be obtained by contacting the corresponding author.}. 

\subsubsection{Experimental Testbed}
Our experimental testbed uses an Asus RT-AC86U Access Point (which uses a Broadcom 4366E chipset and supports 802.11ac MIMO with up to three spatial streams.   It is configured to use the 5GHz frequency band with 80MHz channel bandwidth.   This setup allows high spatial usage (we observe that almost always three spatial streams are used) and high data rates (up to MCS 9).    By default aggregation supports AMSDU's and allows up to 128 packets to be aggregated in a frame (namely 64 AMSDUs each containing two packets).  

A Linux server connected to this AP via a gigabit switch uses iperf 2.0.5 to generate UDP downlink traffic to the WLAN clients.     Iperf inserts a sender-side timestamp into the packet payload and since the various machines are tightly synchronised over a LAN this can be used to estimate the one-way packet delay (the time between when a packet is passed into the socket in the sender and when it is received).  Note, however, that in production networks accurate measurement of one-way delay is typically not straightforward as it is difficult to maintain accurate synchronisation between server and client clocks (NTP typically only synchronises clocks to within a few tens of milliseconds).  

\subsection{Simulation Measurements}
Figure \ref{fig:ns3results} plots measured simulation performance of the controller as the target delay $\bar{T}$ is varied from 5-20ms, the number of client stations is varied from 1 to 25 and for two values of MCS rate.  It can be seen that the controller consistently regulates the delay quite tightly around the target value except when the aggregation level hits the specified upper limit of $48$ packets, as expected.   Also shown on these plots are the 75th percentile values.  These mostly overlay the mean values, indicating tight regulation of delay and rate.  


\begin{figure}
\centering
\subfigure[MCS 9, NSS 1 (390Mbps)]{
\includegraphics[width=0.46\columnwidth]{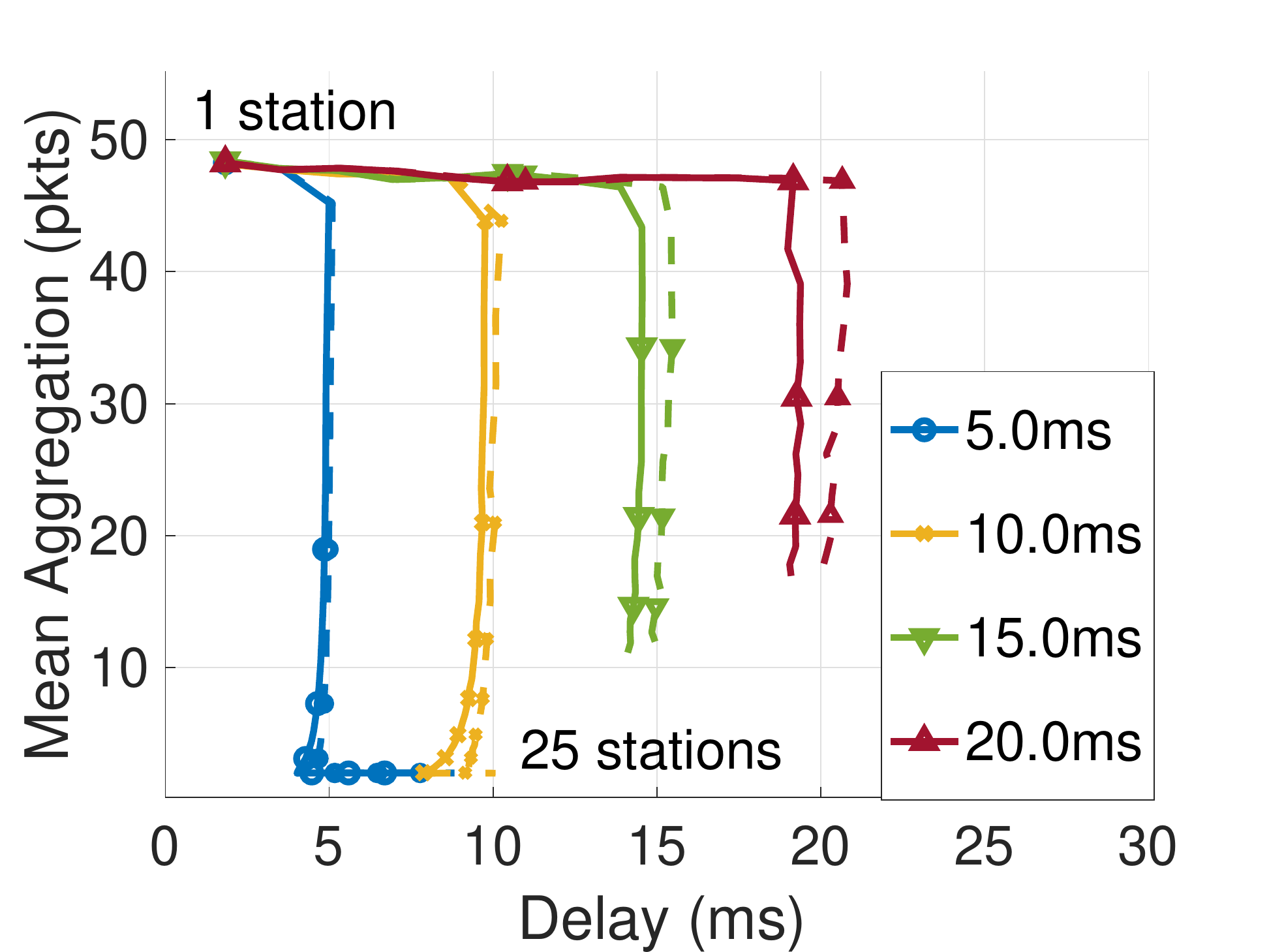}
}
\subfigure[]{
\includegraphics[width=0.46\columnwidth]{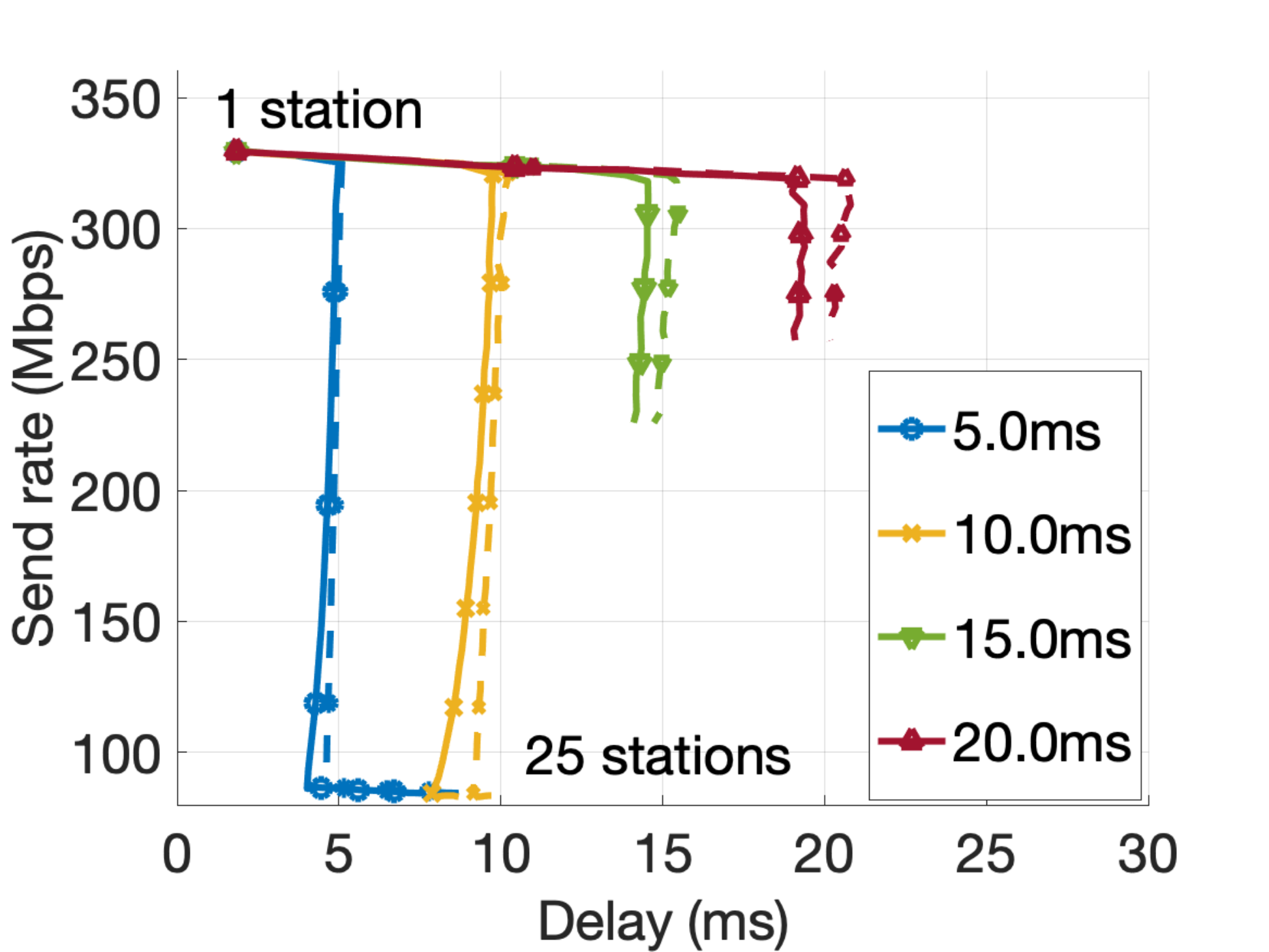}
}
\subfigure[MCS 4, NSS 1 (175Mbps)]{
\includegraphics[width=0.46\columnwidth]{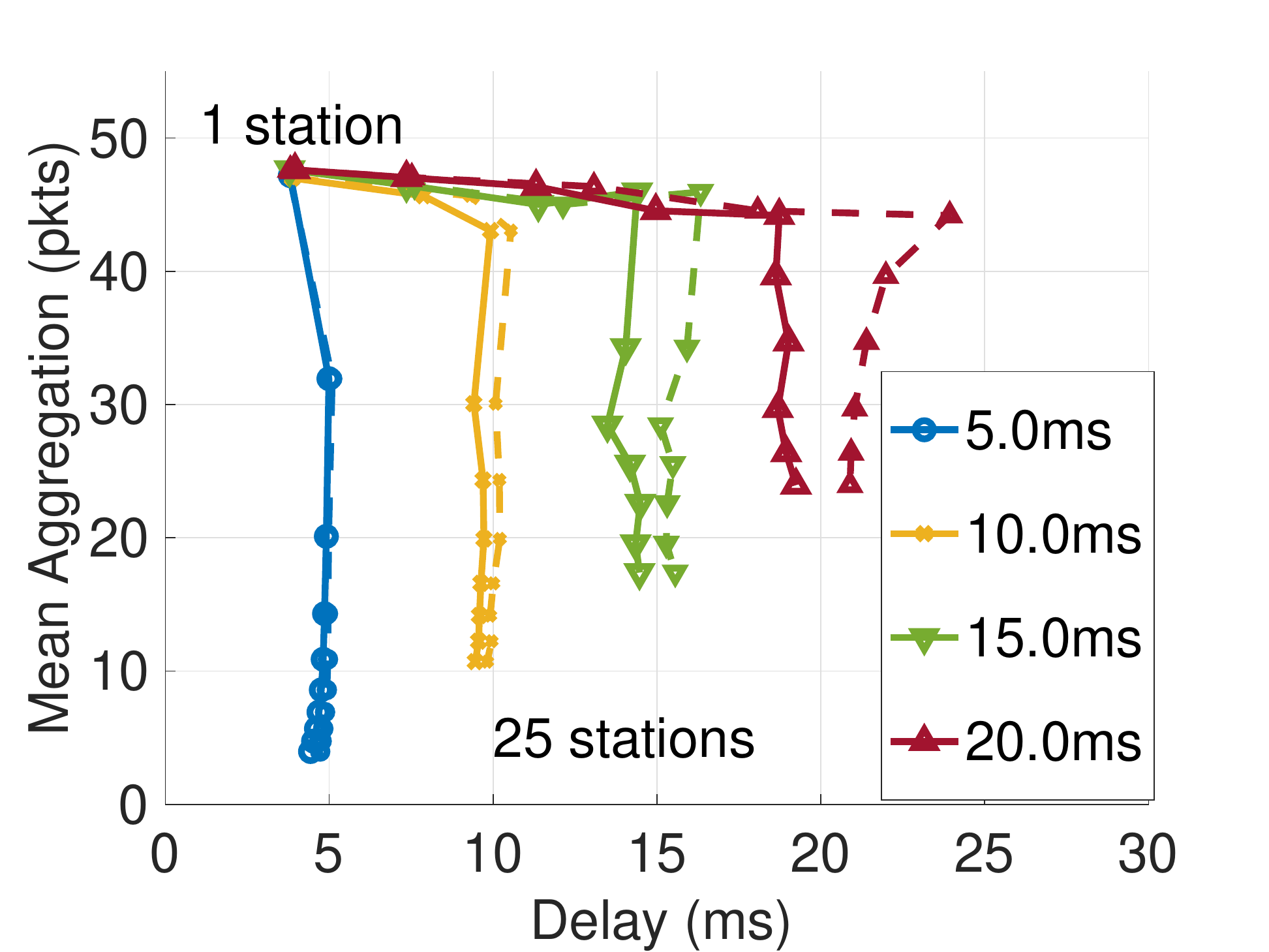}
}
\subfigure[]{
\includegraphics[width=0.46\columnwidth]{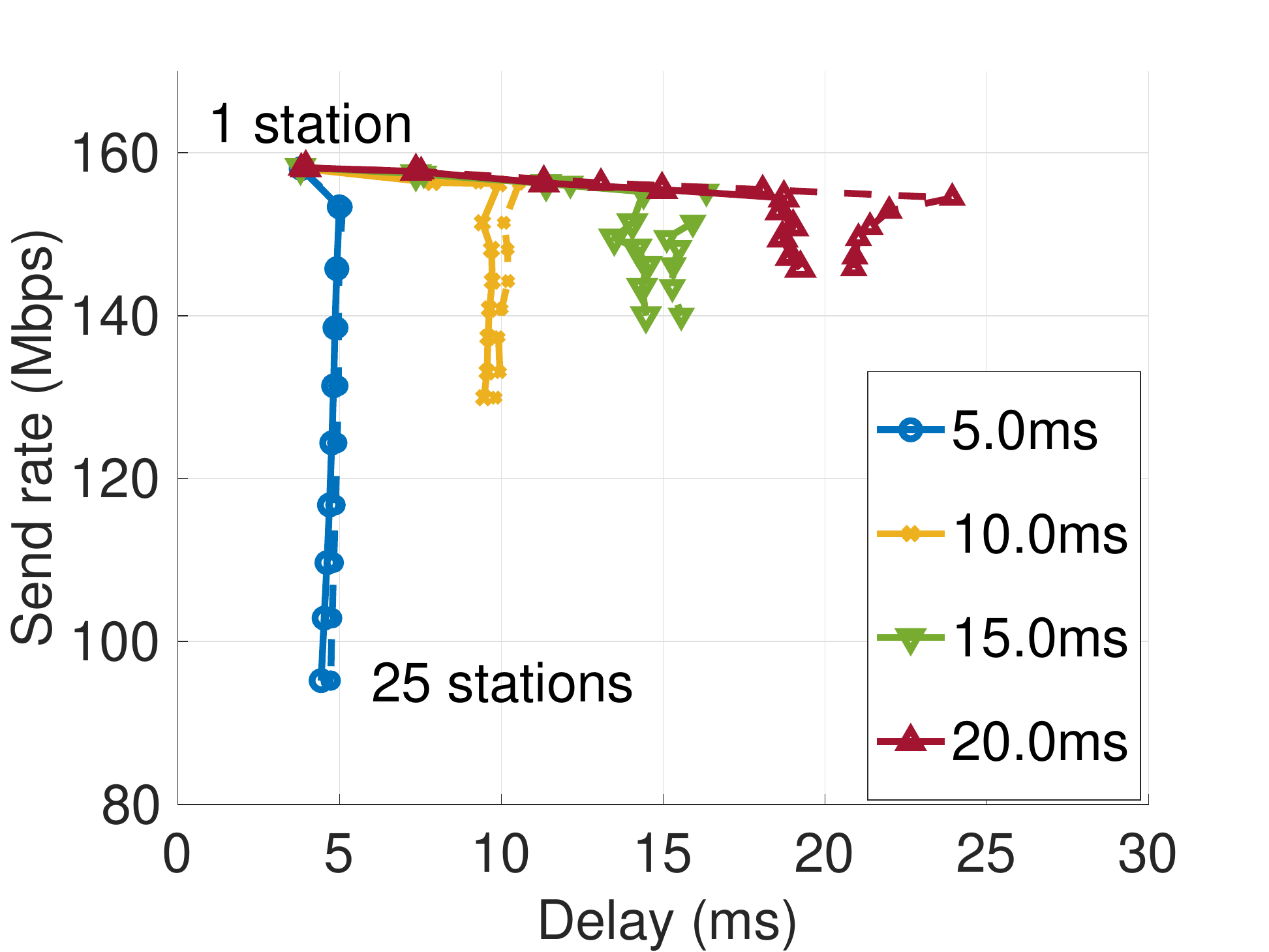}
}
\caption{Measured delay and rate of the controlled system as the target delay, number of client stations and MCS rate are varied.  Solid lines indicate mean delay and rate, dashed lines the 75th percentile values. $\bar{N}=48$ packets, NS3 data.}\label{fig:ns3results}
\end{figure}

\subsection{Experimental Measurements}\label{sec:expts}

\begin{figure}
\centering
\subfigure[Receive Rate]{
\includegraphics[width=0.46\columnwidth]{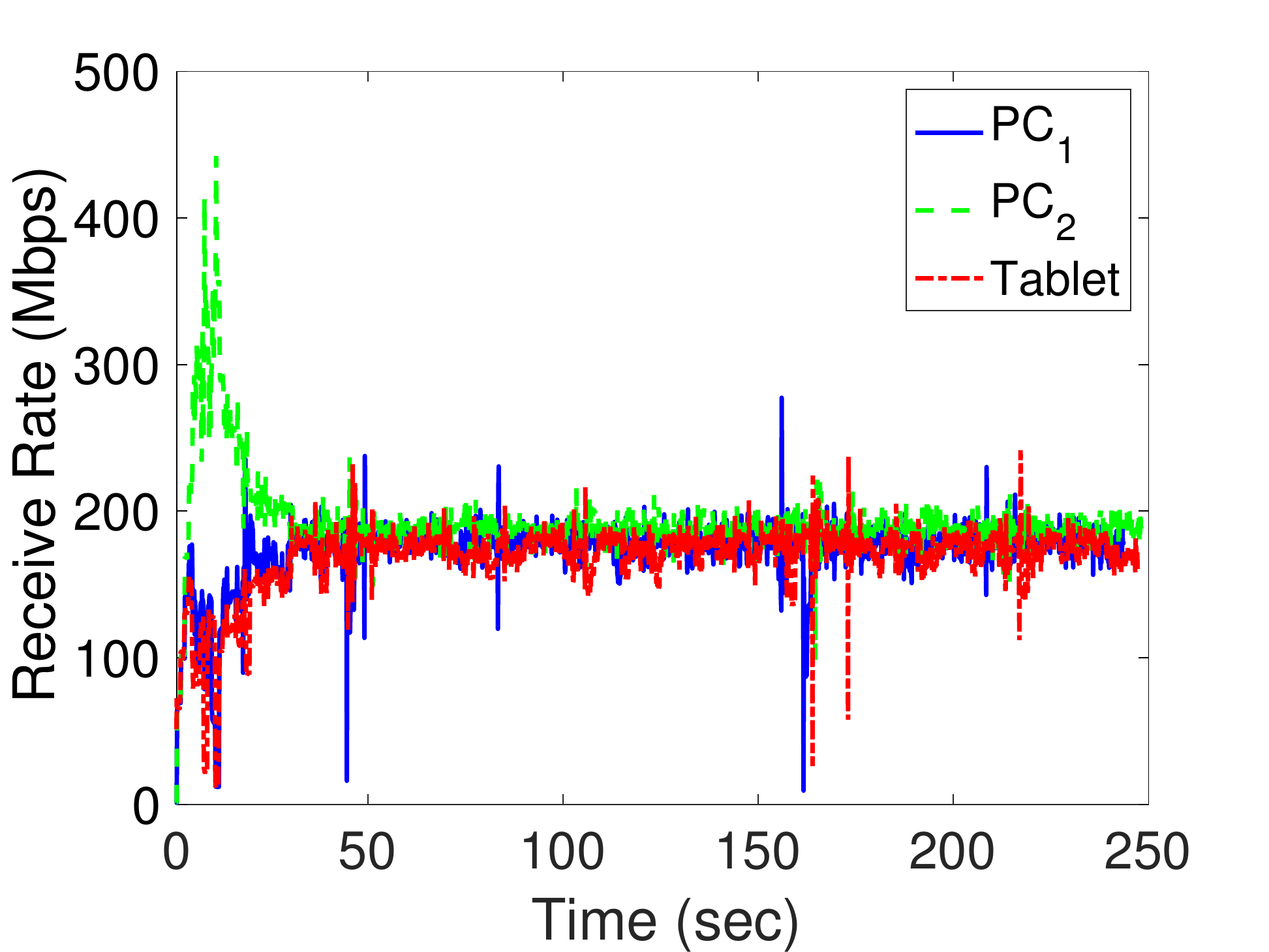}
}
\subfigure[One-way Delay]{
\includegraphics[width=0.46\columnwidth]{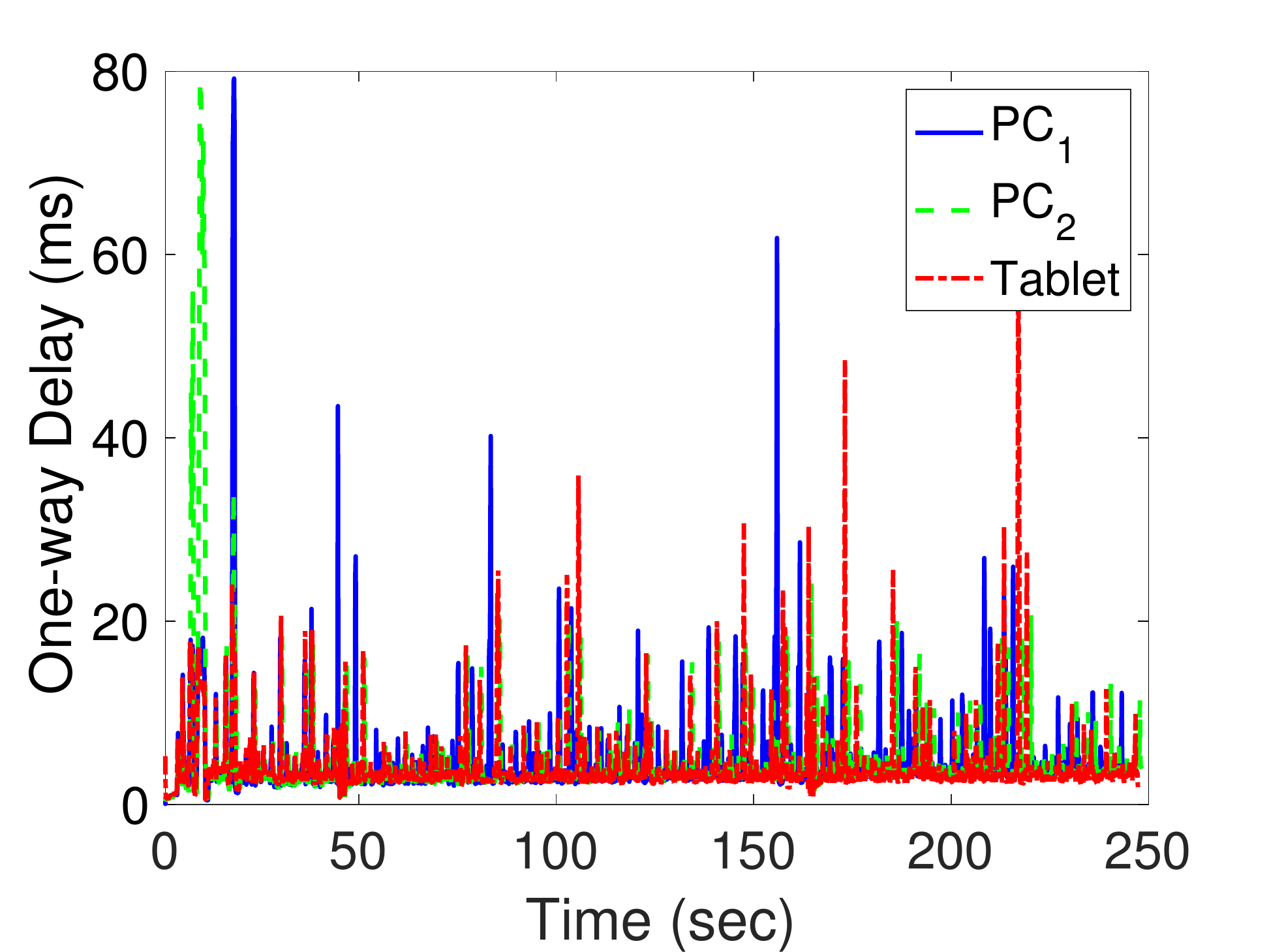}
}
\subfigure[MCS]{
\includegraphics[width=0.46\columnwidth]{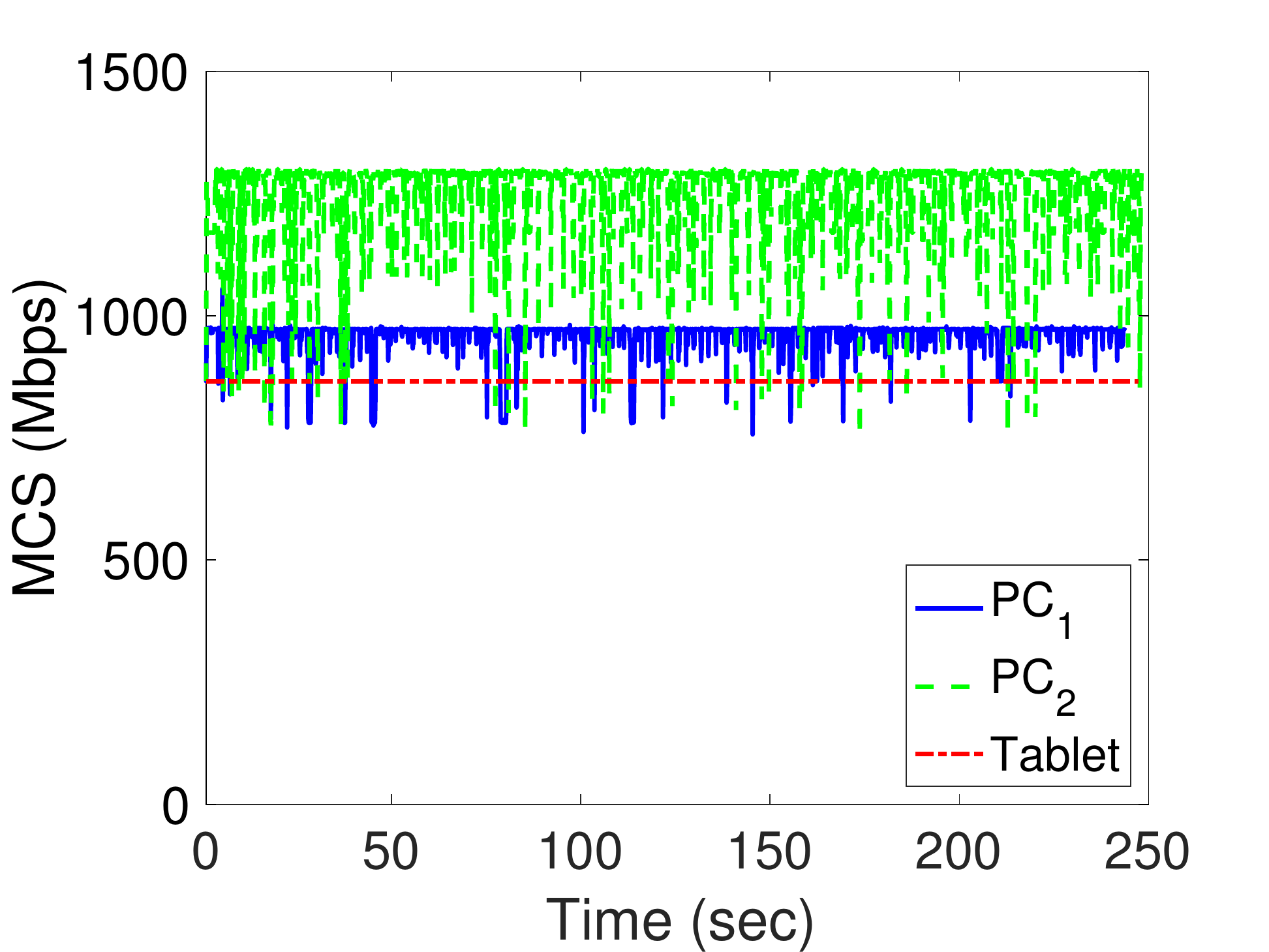}
}
\subfigure[Aggregation Level]{
\includegraphics[width=0.46\columnwidth]{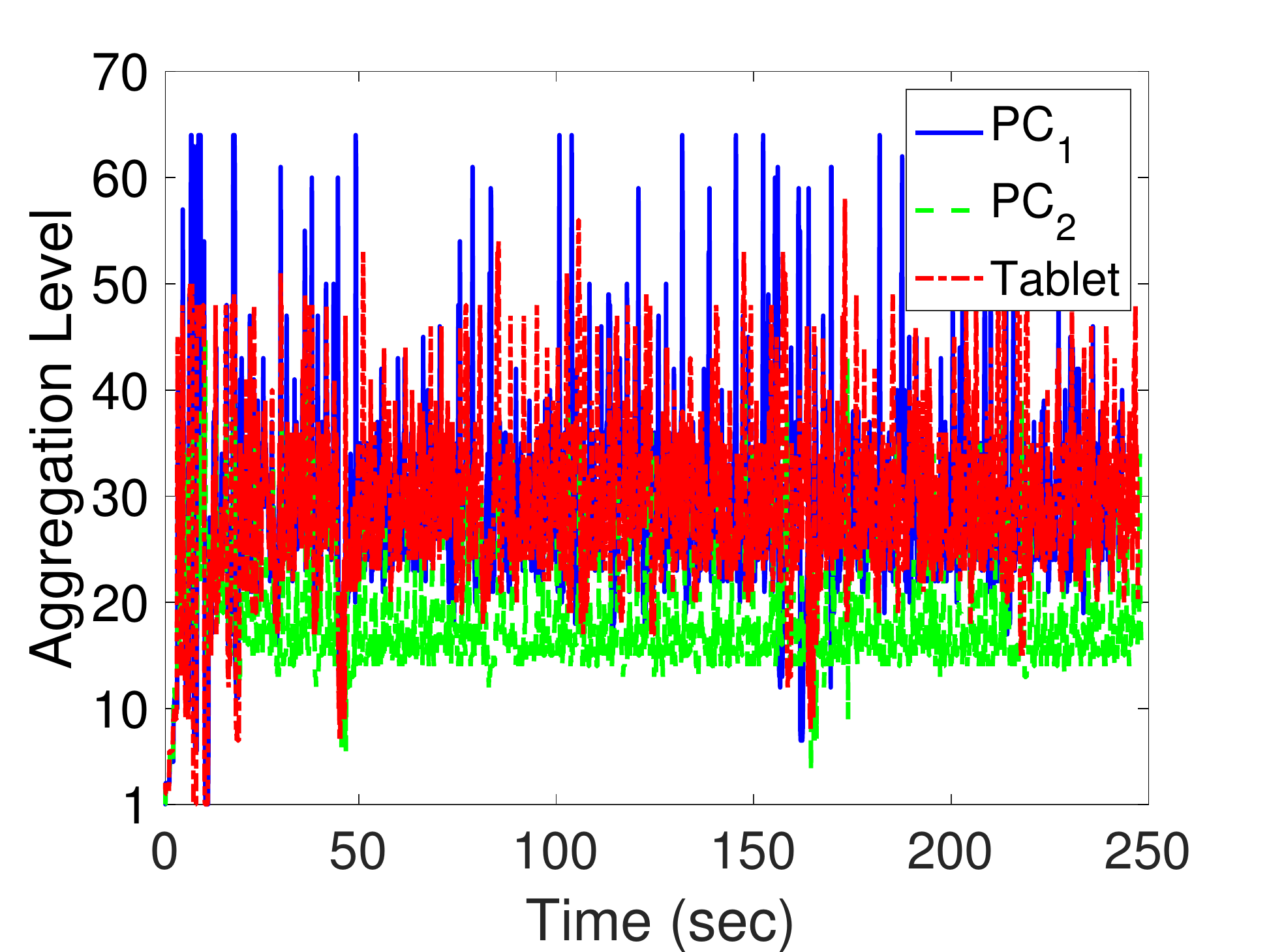}
}
\caption[Managing an edge network using the nonlinear feedback controller.]{Managing an edge network using the nonlinear feedback controller. The one-way delay and MCS values are averaged over 100ms intervals. 
Experimental data.}\label{fig:testbed2}
\end{figure}

\subsubsection{Single Station}
Figure \ref{fig:edge}(b) plots typical a rate and delay time-history measured in our experimental testbed with a single client station.  The rate is close to the maximum capacity while the delay is maintained at a low value of around 2ms.   

\subsubsection{Multiple Stations}
Figure \ref{fig:testbed2} plots measured time histories with three client stations (two PCs and an android tablet).   It can be seen from Figure \ref{fig:testbed2}(a) that the rates to the stations quickly converge.  Figure \ref{fig:testbed2}(b) shows the corresponding delays, which are regulated around the target value of $\bar{T} = 10$ms, although fluctuations due to MAC and channel randomness can also be seen.  Figure \ref{fig:testbed2}(c) shows the measured MCS rates of the three stations, which reflect the radio channel quality (a higher rate indicating a better channel) and it can seen that PC2 has a significantly better channel than the two other clients (it is located closer to the AP).   Figure \ref{fig:testbed2}(d)  shows the measured aggregation levels, and since an equal airtime policy is enforced by the controller it can be seen that the aggregation level of PC2 is lower (since its MCS rate is higher).

\begin{figure}
 \centering
 \subfigure[Receive Rate]{
 \includegraphics[width=0.46\columnwidth]{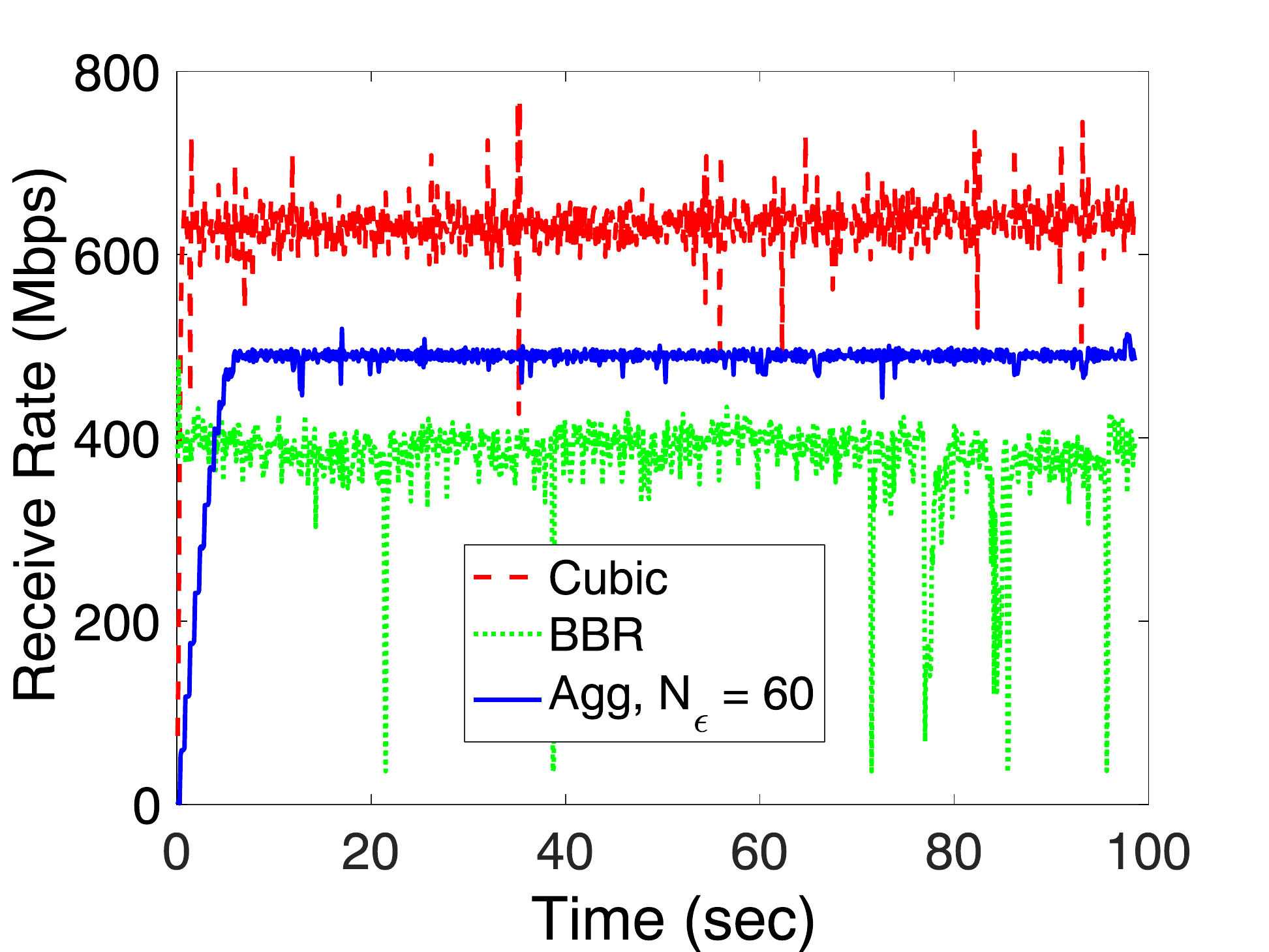}
 }
\subfigure[One-way Delay]{
 \includegraphics[width=0.46\columnwidth]{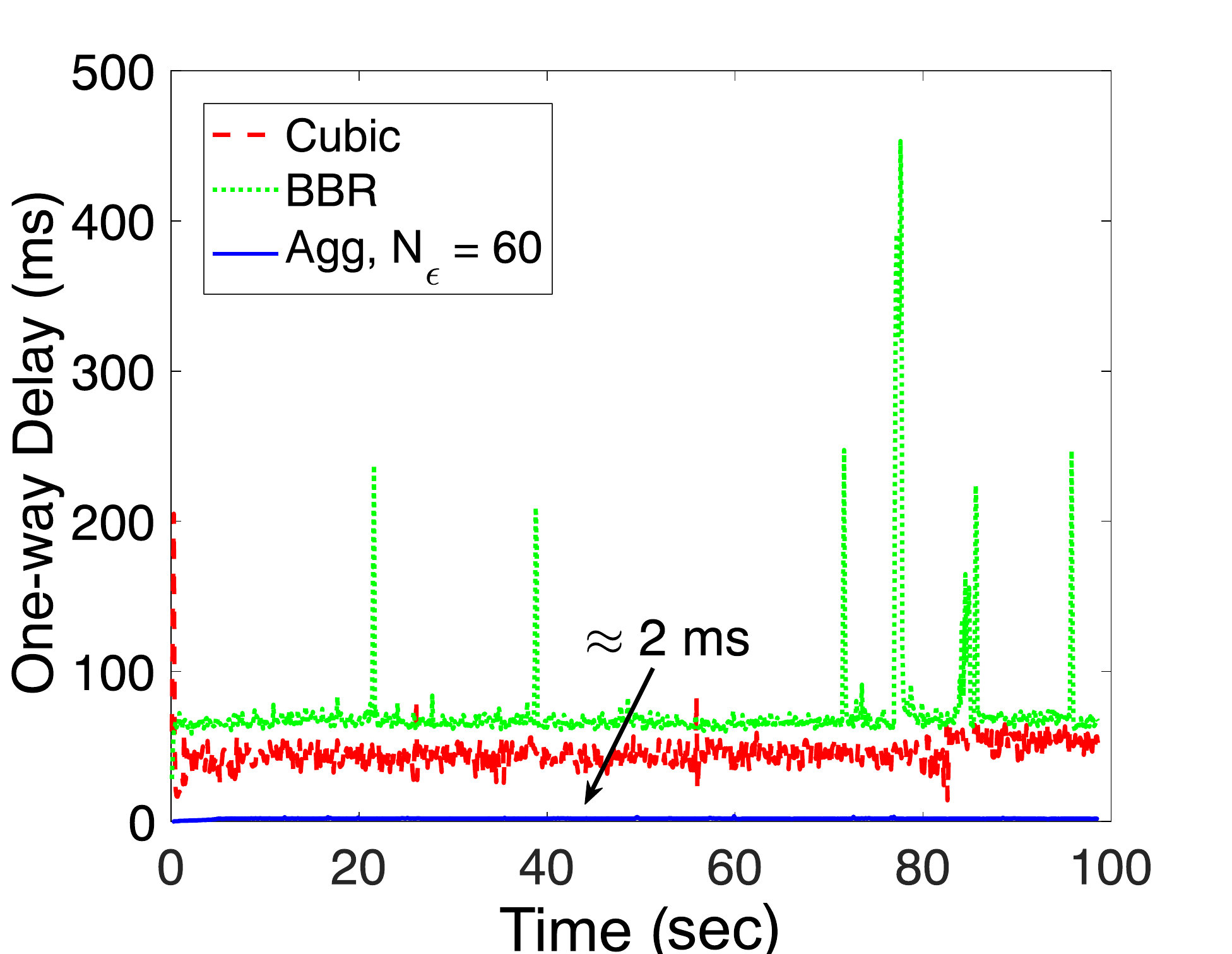}
 }
 \subfigure[\#Packet loss]{
 \includegraphics[width=0.46\columnwidth]{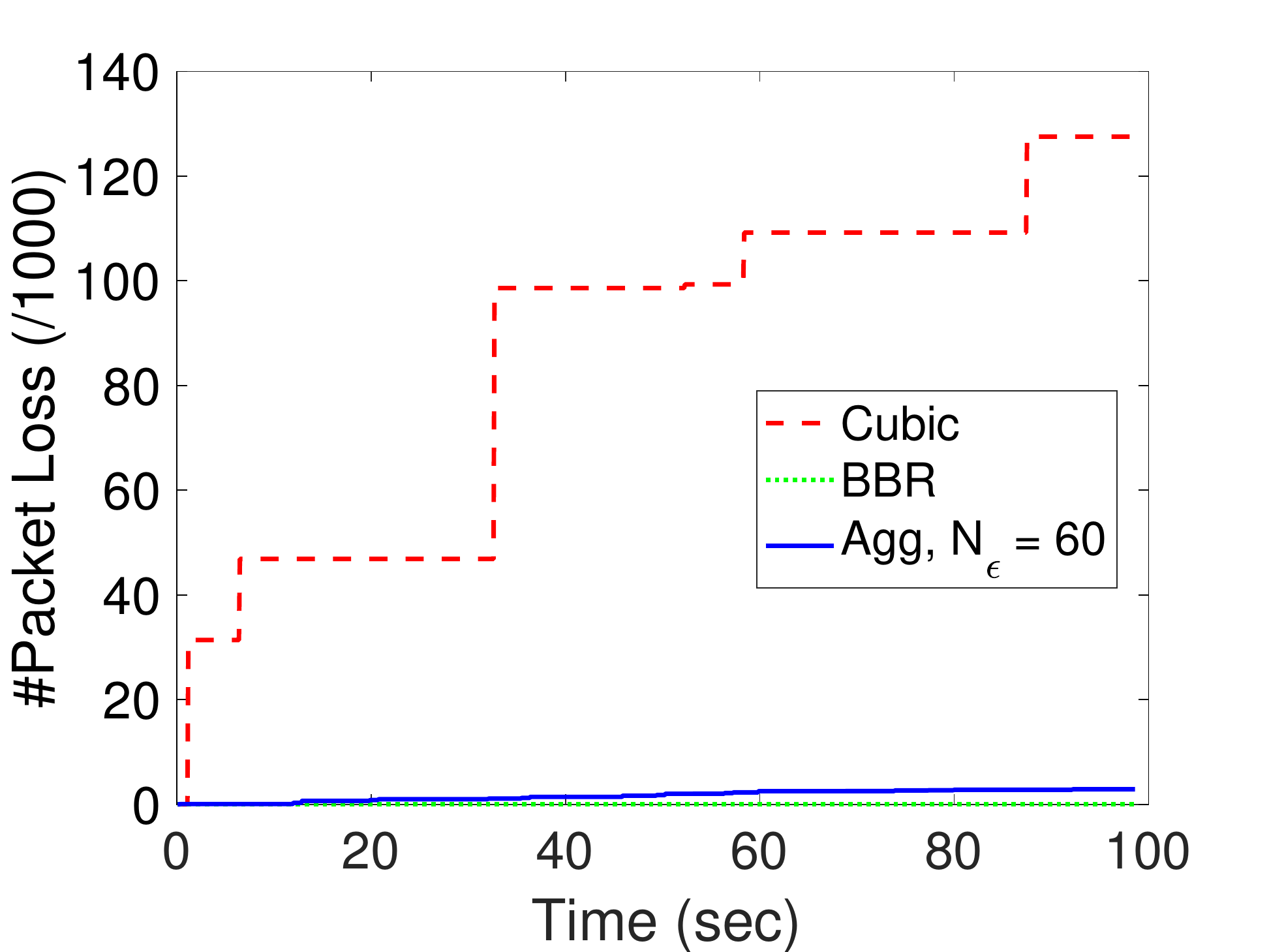}
 }
 \caption{Compare the performance of aggregation-based rate control algorithm with TCP Cubic and BBR.   The one-way delay in (b) is averaged over 100ms intervals.  
 Experimental data.}\label{fig:fig23}
\end{figure}

\subsubsection{Performance Comparison With TCP Cubic \& BBR}
We briefly compare the performance of the aggregation-based rate control algorithm with TCP Cubic~\cite{Cubic}, the default congestion control algorithm used by Linux and Android.  In addition, we compare performance against TCP BBR \cite{BBR} since this is a state-of-the-art congestion control algorithm currently being developed by Google and which also targets high rate and low latency.

Since TCP Cubic is implemented on Android we use a Samsung Galaxy tablet as client. However, TCP BBR is not currently available for Android and so we use a Linux box (Debian Stretch, 4.9.0-7-amd64 kernel) as the BBR client. 

Figure \ref{fig:fig23} shows typical receive rate and one-way delay time histories measured for the three algorithms.  It can be seen from Figure \ref{fig:fig23}(a) that Cubic selects the highest rate (around 600Mbps) but from Figure \ref{fig:fig23}(b) that this comes at the cost of high one-way delay (around 50ms).  This is as expected since Cubic uses loss-based congestion control and so increases the send rate until queue overflow (and so a large queue backlog and high queueing delay at the AP) occurs.   As confirmation, Figure \ref{fig:fig23}(c) plots the number of packet losses vs time and it can be seen that these increase over time when using Cubic, each step increase corresponding to a queue overflow event followed by backoff of the TCP congestion window.

BBR selects the lowest rate (around 400Mbps) of the three algorithms, but surprisingly also has the highest end-to-end one-way delay (around 75ms).  High delay when using BBR has also previously been noted by e.g. \cite{Parisa} where the authors propose that high delay is due to end-host latency within the BBR kernel implementation at both sender and receiver.   However, since our focus is not on BBR we do not pursue this further here but note that the BBR Development team at Google is currently developing a new version of BBR v2.

 Our low delay aggregation-based approach selects a rate (around 480 Mbps), between that of Cubic and BBR, consistent with the analysis in earlier sections.   Importantly, the end-to-end one-way delay is around 2ms i.e. more than 20 times lower than that with Cubic and BBR.   It can also be seen from Figure \ref{fig:fig23}(c) that it induces very few losses (a handful out of the around 4M packets sent over the 100s interval shown).
 
%
%

\section{Conclusions}
In this paper we consider the analysis and design of a feedback controller to regulate queueing delay in a next generation edge transport architecture for 802.11ac WLANs.   We develop a novel nonlinear control design inspired by the solution to an associated proportional fair optimisation problem.   The controller compensates for plant nonlinearities and so can be used for the full envelope of operation.   The robust stability of the closed-loop system is analysed and the selection of control design parameters discussed.     We develop an implementation of the nonlinear control design and use this to present a performance evaluation using both simulations and experimental measurements.

\section*{Acknowledgements}
DL would like to thank Hamid Hassani for discussions and carrying out the experimental tests in Section \ref{sec:expts}.

\bibliographystyle{IEEEtran}
\bibliography{references.bib}

%
\end{document}